\theoremstyle{plain}
\newtheorem{theorem}{Theorem}
\theoremstyle{definition}
\newtheorem{definition}{Definition}
\newtheorem{lemma}[theorem]{Lemma}
\newtheorem{example}{Example}
\newcommand{\ucomment}[1]{}
\newcommand{\adcomment}[1]{}
\newcommand{\mpcomment}[1]{}
\newcommand{\pkcomment}[1]{}
\newcommand{\seclabel}[1]{\label{sec:#1}}
\newcommand{\secref}[1]{Section~\ref{sec:#1}}
\newcommand{\exlabel}[1]{\label{ex:#1}}
\newcommand{\exref}[1]{Example~\ref{ex:#1}}
\newcommand{\deflabel}[1]{\label{def:#1}}
\newcommand{\defref}[1]{Definition~\ref{def:#1}}
\newcommand{\thmlabel}[1]{\label{thm:#1}}
\newcommand{\thmref}[1]{Theorem~\ref{thm:#1}}
\newcommand{\lemlabel}[1]{\label{lem:#1}}
\newcommand{\itemlabel}[1]{\label{itm:#1}}
\newcommand{\itemref}[1]{(\ref{itm:#1})}
\newcommand{\nats}{\mathbb{N}}
\newcommand{\Cc}{\mathcal{C}}
\newcommand{\Ff}{\mathcal{F}}
\newcommand{\Aa}{\mathcal{A}}
\newcommand{\Mm}{\mathcal{M}}
\newcommand{\Ii}{\mathcal{I}}
\newcommand{\Pp}{\mathcal{P}}
\renewcommand{\vec}[1]{{\bf #1}}
\newcommand{\set}[1]{\{#1\}}
\newcommand{\setpred}[2]{\{#1 \, \mid \, #2\}}
\newcommand{\setpredsmall}[2]{\{#1 \mid #2\}}
\newcommand{\angular}[1]{\langle #1 \rangle}
\newcommand{\delequal}{\overset{\Delta}{=}}
\renewcommand{\emptyset}{\varnothing}
\newcommand{\xdownarrow}[1]{%
  {\left\downarrow\vbox to #1{}\right.\kern-\nulldelimiterspace}
}
\newcommand{\stmt}{\angular{stmt}}
\newcommand{\cond}{\angular{cond}}
\newcommand{\code}[1]{\texttt{#1}}
\newcommand{\codekey}[1]{\textbf{#1}}
\newcommand{\cd}[1]{\code{#1}}
\newcommand{\pfalse}{\codekey{false}}
\newcommand{\pskip}{\codekey{skip}}
\newcommand{\passume}{\codekey{assume}}
\newcommand{\pif}{\codekey{if}}
\newcommand{\pthen}{\codekey{then}}
\newcommand{\pelse}{\codekey{else}}
\newcommand{\pwhile}{\codekey{while}}
\newcommand{\passert}{\codekey{assert}}
\newcommand{\passign}{:=}
\newcommand{\pderef}[2]{#1{\cdot}#2}
\newcommand{\pfree}[1]{\codekey{free}(#1)}
\newcommand{\palloc}[1]{\codekey{alloc}(#1)}
\newcommand{\exec}{\textsf{Exec}}
\newcommand{\pexec}{\textsf{PExec}}
\newcommand{\comp}{\textsf{Comp}}
\newcommand{\Terms}{\textsf{Terms}}
\newcommand{\rcong}{{\equiv}}
\newcommand{\init}[1]{\widehat{#1}}
\newcommand{\dblqt}[1]{\text{``}#1\text{''}}
\newcommand{\eqcl}[2]{[#1]_{#2}}
\newcommand{\undf}{\mathsf{undef}}
\newcommand{\pspc}{\mathsf{PSPACE}}
\newcommand{\locT}{\mathsf{Loc}}
\newcommand{\dataT}{\mathsf{Data}}
\newcommand{\sortT}{\mathsf{sort}}
\newcommand{\locTodataT}{{\locT \to \dataT}}
\newcommand{\fields}{\mathsf{Flds}}
\newcommand{\reachspec}{\varphi}
\newcommand{\startlocs}{\mathsf{Start}}
\newcommand{\reachptrs}{\mathsf{Pointers}}
\newcommand{\stoplocs}{\mathsf{Stop}}
\newcommand{\stoploc}{\mathsf{stop}}
\newcommand{\static}{\mathsf{static}}
\newcommand{\dynamic}{\mathsf{dynamic}}
\newcommand{\staticuniv}{U^\static_\locT}
\newcommand{\dynamicuniv}{U^\dynamic_\locT}
\newcommand{\val}{\mathsf{Val}^\Mm}
\newcommand{\fldval}{\fields\mathsf{Val}^\Mm}
\newcommand{\paramval}[1]{\mathsf{Val}^{#1}}
\newcommand{\paramfldval}[1]{\fields\mathsf{Val}^{#1}}
\newcommand{\allocs}{\mathsf{allocations}}
\newcommand{\alloc}{\mathsf{Alloc}}
\newcommand{\reach}{\mathsf{Reach}}
\newcommand{\memsafe}{\textsf{MS}}
\newcommand{\checkkoherence}{\textsf{checkSC}}
\newcommand{\infeasible}{q_\mathsf{infeasible}}
\newcommand{\unsafe}{q_\mathsf{unsafe}}
\newcommand{\nil}{\mathtt{NIL}}
\newcommand{\mahesh}{X}
\newcommand{\xtra}{\mahesh}
\newcommand{\coherent}{{\text{coherent}}}
\newcommand{\coherence}{{\text{coherence}}}
\newcommand{\Coherent}{{\text{Coherent}}}
\newcommand{\Coherence}{{\text{Coherence}}}
\newcommand{\alaw}{\text{alias-aware}}
\newcommand{\Alaw}{\text{Alias-Aware}}
\newcommand{\koherent}{{\text{streaming-coherent}}}
\newcommand{\Koherent}{{\text{Streaming-Coherent}}}
\newcommand{\koherence}{\text{streaming-coherence}}
\newcommand{\Koherence}{\text{Streaming-Coherence}}
\newcommand{\mapcomp}{\fields\textsf{Comp}}
\newcommand{\dynconst}{\mathsf{c}_\dynamic}
\newcommand{\dynptr}{\mathsf{f}_\dynamic}
\newcommand{\eqclosure}{\mathsf{Closure}^=}
\newcounter{exec}
\newcommand{\ctr}{\arabic{exec}}
\newcommand{\incctr}{\addtocounter{exec}{1}}
\newcommand{\heaps}{\textsf{Heaps}}
\newcommand{\benchfont}[1]{\textsf{#1}}
\begin{document}

\title[Deciding Memory
Safety for Single-Pass Heap-Manipulating Programs]{Deciding Memory Safety for Single-Pass Heap-Manipulating Programs}


\author{Umang Mathur}
\orcid{0000-0002-7610-0660}             
\email{umathur3@illinois.edu}          
\affiliation{
  \department{Department of Computer Science}             
  \institution{University of Illinois, Urbana-Champaign}           
  \country{USA}                   
}

\author{Adithya Murali}
\email{adithya5@illinois.edu} 
\affiliation{
  \department{Department of Computer Science}             
  \institution{University of Illinois, Urbana-Champaign}           
  \country{USA}                   
}

\author{Paul Krogmeier}
\email{paulmk2@illinois.edu}         
\affiliation{
  \department{Department of Computer Science}             
  \institution{University of Illinois, Urbana-Champaign}           
  \country{USA}                   
}

\author{P. Madhusudan}
\email{madhu@illinois.edu}         
\affiliation{
  \department{Department of Computer Science}             
  \institution{University of Illinois, Urbana-Champaign}           
  \country{USA}                   
}

\author{Mahesh Viswanathan}
\email{vmahesh@illinois.edu}         
\affiliation{
  \department{Department of Computer Science}             
  \institution{University of Illinois, Urbana-Champaign}           
  \country{USA}                   
}


\begin{abstract}
We investigate the decidability of automatic program verification for programs that manipulate heaps, and in particular, decision procedures for proving memory safety for them. We extend recent work that identified a decidable subclass of uninterpreted programs to a class of \alaw\ programs that can update maps. We apply this theory to develop verification algorithms for memory safety--- determining if a heap-manipulating program that allocates and frees memory locations and manipulates heap pointers does not dereference an unallocated memory location. We show that this problem is decidable when the initial allocated heap forms a forest data-structure and when programs are \emph{\koherent}, which intuitively restricts programs to make a single pass over a data-structure. Our experimental evaluation on a set of  library routines that manipulate forest data-structures shows that common single-pass algorithms on data-structures often fall in the decidable class, and that our decision procedure is efficient in verifying them.
\end{abstract}

\begin{CCSXML}
<ccs2012>
<concept>
<concept_id>10003752.10003790.10002990</concept_id>
<concept_desc>Theory of computation~Logic and verification</concept_desc>
<concept_significance>500</concept_significance>
</concept>
<concept>
<concept_id>10003752.10003790.10003794</concept_id>
<concept_desc>Theory of computation~Automated reasoning</concept_desc>
<concept_significance>300</concept_significance>
</concept>
</ccs2012>
\end{CCSXML}

\ccsdesc[500]{Theory of computation~Logic and verification}
\ccsdesc[300]{Theory of computation~Automated reasoning}


\keywords{Memory Safety, Program Verification, Aliasing, Decidability, Uninterpreted Programs, Streaming-Coherence, Forest Data-Structures} 

\maketitle
\renewcommand{\shortauthors}{U. Mathur, A. Murali, P. Krogmeier, P. Madhusudan, M. Viswanathan}


\section{Introduction}
\seclabel{intro}

The problem of automatic (safety) verification is to ascertain whether a program satisfies its assertions on all inputs and on all executions.
The standard technique for proving programs correct involves writing
inductive invariants in terms of loop invariants and pre/post conditions,
and proving the resulting verification conditions valid~\cite{FloydMeaning1967,Hoare1969}.
While there has been tremendous progress in identifying decidable fragments
for checking validity of verification conditions (Nelson-Oppen combinations of decidable theories realized by efficient SMT solvers~\cite{calcofcomputation}),
decidable program verification without annotations
has been elusive. Apart from programs over finite domains, very few 
natural decidable classes are known.

In a recent paper~\cite{coherence2019}, a class of \emph{uninterpreted
programs} was identified and shown to have a decidable verification problem.
Uninterpreted programs work over arbitrary data domains; the domains give meaning to the constants, relations, and functions in the program, interpreting equality using its natural definition. A program is deemed correct only if it satisfies its assertions in all executions and for all data domains. The authors show that for
a class of programs that satisfies a \emph{\coherence} condition, verification
is decidable. The decision procedure relies on a streaming congruence closure
algorithm realized as  automata.
The results of~\cite{coherence2019} are, however, purely theoretical; there are no general application domains identified where uninterpreted program verification would be useful, nor is there any implementation.

The goal of this paper is to study completely automated verification for \emph{heap-manipulating programs}. When modeling the heap, we can treat pointers as \emph{unary uninterpreted functions}; this is a natural modeling choice that does not involve any abstraction (as pointer fields are really arbitrary, unary functions). Thus, it is reasonable to hope that uninterpreted program verification techniques could be useful in this setting.

However, there is a fundamental challenge that we need to overcome: heap-manipulating programs \emph{modify} heap pointers. With pointers modeled as functions, these programs modify functions/maps. It turns out that the theory of uninterpreted program verification developed in~\cite{coherence2019} is severely inadequate for tackling programs that modify maps. 

\emph{In this paper, we undertake a fundamental study of 
verification for 
programs that
have updatable maps/functions. We then apply this theory to show that checking memory safety of heap-manipulating programs (where one is given a recursively-described set of allocated locations and asked to check whether a program only dereferences locations that are within this set) is decidable
for a subclass of programs, and evaluate the algorithms with an implementation and experiments.}

\subsection{Alias Awareness and the Notion of Congruence for Programs with Updatable Maps}

We can think of uninterpreted programs as computating \emph{terms}
using the function and constants symbols appearing in the code---in the beginning, each variable stores a constant, and 
after executing an assignment statement of the form  $\dblqt{x \passign f(y)}$, the term stored in $x$ is $f(t_y)$,
where $t_y$ is the term stored 
in $y$ before executing this statement.
As the program executes, in addition to computing terms, it places equality and disequality constraints on the terms it computes (inherited from the conditionals in \pif-\pthen-\pelse~and \pwhile~constructs). 

However, when the program updates maps, this fundamental property, that the program computes terms, is destroyed. Consider an example where there are two locations pointed to by variables $x$ and $y$, but we do not know whether $x$ and $y$ point to the same location or not. If we update a pointer-field on $x$ and then read the same pointer-field from $y$ into a variable $k$, we cannot really associate any term with the variable $k$, as the term crucially depends on whether $x$ and $y$ were aliases to the same location or not.

The above is a fundamental problem that wreaks havoc, 
making program verification essentially 
impossible in the presence of updatable maps. 
The notion of \coherence\ in~\cite{coherence2019} fails, as it crucially relies on this notion of terms. And it cannot be repaired, as the semantic meaning of the \coherence~ condition fundamentally involves not computing the same term multiple times. 

The primary observation, which saves the framework, is to note that the entire problem is due to not knowing aliasing relationships. We call a program execution \emph{\alaw\,} if, at every point where it updates a function on the element pointed to by a program variable $x$, the precise aliasing relationship between $x$ and all other program variables in scope is \emph{determined}. More precisely, when the execution modifies a function/pointer-field of $x$, for every other program variable $z$, it must either be the case that $x$ is different from $z$ in any data-model/heap or it must be the case that $x$ is equal to $z$ in all data-models/heaps.

We show that alias-awareness is a panacea for our problems. For \alaw~ programs (programs whose executions are all \alaw), we show we can associate terms with variables after a computation that updates maps, and further show that the notion of \coherence~ extends naturally to programs that update maps. We then show that for coherent \alaw~ programs, the verification problem becomes decidable. These results constitute the first main contribution of the paper.

\vspace{-0.1in}
\subsection{Application to Verifying Memory Safety}

We then study the application of our framework to verifying memory safety. 
Our key observation is that for programs that manipulate 
\emph{forest data-structures} (data-structures consisting 
of disjoint tree-like structures), programs are naturally \alaw. 
Intuitively, when traversing forest data-structures, 
aliasing information is implicitly present. 
For instance, if $\cd{x}$ points to a location of 
a forest data-structure, we know that the location pointed to by 
$\cd{x}$, the one pointed to by the left child 
$\pderef{\cd{x}}{\cd{left}}$, and the one pointed to by the right child
$\pderef{\cd{x}}{\cd{right}}$ are all different.

In this paper, we define memory safety as follows. A heap-manipulating program starts with a set of allocated heap locations. During its execution, it dereferences pointers on heap locations, and allocates and frees locations. A program is memory safe if it never dereferences 
a location that is not in the allocated set. The above definition of memory safety
captures the usual categories of memory safety errors such as null-pointer dereferences, use after free, use of uninitialized memory, illegal freeing of memory, etc.~\cite{PLEnthusiast}.
However, in this paper, we do not consider allocation of \emph{contiguous blocks of arbitrary size} of memory (and hence do not handle arrays and buffer overflows of arrays in languages like C, etc.). Rather, we assume that allocation is done in terms of \emph{records} of fixed size (like \texttt{struct}s in C), and we disallow pointer arithmetic in our programs.

Our second main contribution of the paper is a technical result that gives an efficient decision procedure for verifying memory safety for a subclass of imperative heap-manipulating programs, whose \emph{initial\,} allocated heaps are restricted to forest data-structures.


Handling forest data-structures, i.e., disjoint lists and trees, is useful as they are ubiquitous. Note that we require only the initial
heaps to be forest data-structures;
the program can execute for an arbitrarily long time and create cycles/merges as it manipulates the structures. 
We model the primitive types and operations on them using uninterpreted functions and relations, similar in spirit to the way the work in ~\cite{coherence2019} handles all data. The key insight here is that this is a reasonable modeling choice, since programs typically do not rely on the semantics of the primitive data domains in order to assure memory safety (we also show this empirically in experiments).  
The salient aspect of our work is that we model the pointers in the heap \emph{precisely} using updatable maps, without resorting to \emph{any} abstraction (classical automatic analysis of heap programs
typically uses abstractions for heaps; for instance, shape analysis involves an abstraction of the heap locations to a finite abstract domain~\cite{tvla}).

We allow the user to specify the initial allocated set as the (unbounded) set of locations reachable from various locations (pointed to by certain program variables) using particular pointer fields, until a specific set of locations is reached.
The memory safety problem is then to check whether such a program, starting from an \emph{arbitrary} heap storing a forest data-structure, an arbitrary model for the primitive types, and with the specified allocated set, 
dereferences only those locations that 
are in the (potentially changing) allocated set, on all executions.

The above problem turns out to be undecidable (a direct consequence from~\cite{coherence2019}, as even programs that do not manipulate heaps and have simple equality assertions yield undecidability). 
The main result of the paper is that for a class of programs called \emph{\koherent\ programs}, memory safety is decidable. Intuitively, these correspond to programs that traverse the forest data-structures
in a \emph{single-pass}.

\paragraph{Technical Challenges.}

The primary challenge is dealing with \emph{updatable pointers} and \emph{updatable sets} (the latter are needed
to model the set of allocated locations, which changes during program execution). 
As we show in our first set of results, it is crucial for a program to be alias aware. The fact that our initial data-structures are forests implicitly causes \koherent~programs to be alias aware. When two variables point to locations obtained using different traversals, we know they cannot alias to each other. Also, for {\koherent} programs, we can keep track of whether traversals for any pair of variables are the same, and track their precise aliasing relationships.

\begin{sloppypar}
The culmination of the ideas above is our result that verification of memory safety for {\koherent} programs over forest data-structures is decidable, and is $\pspc$-complete. It is in fact decidable in time that is linear in the size of the program and exponential in the number of variables.
We also show that checking whether a given program is {\koherent} is decidable in $\pspc$.
Note that even checking reachability in programs with Boolean domains has this complexity, and hence our algorithms are quite efficient.
\end{sloppypar}

\paragraph{Evaluation.}
We implement a prototype of our automata-based decision procedure. This involves intersecting an automaton that checks whether a given \koherent~ execution is memory-safe with an automaton that represents the given program's executions. Instead of building these automata and checking emptiness of their intersection, we use an approach that constructs the automaton and the intersection on-the-fly, hence not paying the worst case costs upfront. We evaluate our procedure on a class of standard library functions that manipulate forest data-structures, including linked lists and trees, where various other aspects of the data-structures (such as keys, height, etc.) are modeled using an uninterpreted data domain.
These are typically \emph{single-pass} algorithms on such data-structures, that take pointers to forest data-structures as input
(and may create non-forest data-structures during computation). 

Though we have stringent requirements that programs must meet in order to be in the  decidable class, we show in our experiments that most  natural single-pass programs on forest data-structures meet our requirements. We also show that our tool is able to check if the program falls in the decidable class, and
both verify memory safety and find memory safety errors extremely efficiently. 

We emphasize that the novelty of our approach is in building \emph{decision procedures} for verifying memory safety without the aid of human-given loop invariants, and without abstracting the heap domain (the data
domain is, however, abstracted using uninterpreted functions). In contrast, there are several existing techniques that can prove memory safety \emph{when given manually written loop invariants} or prove memory safety by \emph{abstracting the heap} (which can lead to false positives). Our results hence carve out new ground in  memory safety verification and our experiments show that our approach holds promise for wider applicability and scalability.

\medskip

\noindent In summary, this paper makes the following contributions:
\begin{itemize}
    \item A notion of \alaw~ coherent programs (\secref{define-coherence}),
     and a result that shows that the assertion-checking
     problem for such programs is decidable and 
     $\pspc$-complete (\secref{alias-aware-coherent-verification}).
    \item A notion of \koherence~ and \emph{forest data-structures} (~\secref{forest}) with an efficient decision procedure for verifying memory safety for the class of \koherent\ programs that dynamically manipulate forest data-structures (\secref{automaton}).
    \item An efficient decision procedure determining if programs are \koherent~ (\secref{automaton}).
    \item An experimental evaluation (~\secref{experiments}) showing (a) common library routines that manipulate forest data-structures using single-pass traversals
    are often \koherent, and (b) that the decision procedures presented in this paper (for checking whether programs satisfy the
    \koherent\ requirement and for checking whether \koherent\ programs are memory safe)
    are very efficient, both for proving programs correct and finding errors in incorrect programs.
\end{itemize}

\section{Preliminaries}
\seclabel{prelim}

In this section, we define the syntax and semantics of
programs that manipulate heaps and other relevant concepts useful 
for presenting the main results of the paper.
These are fairly standard and familiar readers may skip the details.


\subsection{Syntax of Heap-Manipulating Programs}

The programs we consider are those that manipulate heaps. It is convenient to abstract heap
structures as consisting of two sorts of distinct elements --- a sort $\locT$ of memory 
locations in the heap, and a sort $\dataT$ of data values. \emph{Field} (or map) symbols
will model pointers from memory locations to memory locations or data values.
Constants and functions over the data domain will be used to construct other data values. In this paper, we will not assume any fixed interpretation 
for either data values or for functions on data values. In this sense, our
programs work over an \emph{uninterpreted} data domain. Predicates over data values will be 
modeled by functions capturing the characteristic function of the predicate. 

Let $\locT$ and $\dataT$ be the sorts of locations and data respectively.
Our vocabulary $\Sigma$ is a tuple of the form 
$(\Cc_\locT, \Ff_\locT, \Cc_\dataT, \Ff_\dataT, \Ff_\locTodataT)$,
where
\begin{itemize}
	\item $\Cc_\locT$ is a set of location constant symbols of sort $\locT$,
	\item $\Ff_\locT$ is a set of unary location function symbols with sort `$\locT, \locT$'~\footnote{We will
	use the notation $\sigma,\tau$ to indicate a function whose arguments are from sort $\sigma$ and which
	returns a value in sort $\tau$. Thus for example `$\locT,\locT$' is a function with one argument of sort
	$\locT$ and which returns an element of sort $\locT$. On the other hand, `$\dataT^r,\dataT$' denotes functions
	with $r$ arguments each of sort $\dataT$ and which returns an element of sort $\dataT$.}, that models pointers
	between heap locations,
	\item $\Cc_\dataT$ denotes the set of data constant symbols of sort $\dataT$,
	\item $\Ff_\dataT = \bigcup\limits_{i \geq 0} F_i$ where $F_r$ is a set of data function symbols of arity $r$ and sort `$\dataT^r, \dataT$', and
	\item $\Ff_\locTodataT$ is a set of unary location function symbols with sort `$\locT, \dataT$', modeling pointers to data
	values stored in heap locations.
\end{itemize}

\subsubsection{Program Syntax}
\seclabel{program-syntax}

Programs will use a finite set of variables to store information --- heap locations and data 
values --- during a computation.
Let us fix $V_\locT = \set{u_1, \ldots, u_l}$ as the set of
location variables
and $V_\dataT = \set{v_1, \ldots, v_m}$ as the set of data variables
and let $V = V_\locT \uplus V_\dataT$~\footnote{We use $A\uplus B$ to denote the 
disjoint union of sets $A$ and $B$.} be the set of all variables. 
In addition, our programs manipulate fields associated with
location variables.
We will model these fields as second order function variables
$\fields_\locT = \set{p_1, \ldots, p_r}$ (pointers from locations to locations) 
and $\fields_\dataT = \set{d_1, \ldots, d_s}$ (pointers from locations to data),
and let $\fields = \fields_\locT \cup \fields_\dataT$.
Taking $x, y \in V_\locT$, $p \in \fields_\locT$, $d \in \fields_\dataT$, $f$ in $\Ff_\dataT$, 
$a, b \in V_\dataT$, and $\vec{c}$ to be a tuple of variables in $V_\dataT$,
the syntax of
programs is given by the following grammar.
\begin{align*}
\stmt ::=& 
\,\,  \pskip \, 
\mid \, x \passign y \, 
\mid \, x \passign \pderef{y}{p} \, 
\mid \, \pderef{y}{p} \passign x \,
\mid \, a \passign \pderef{y}{d} \, 
\mid \, \pderef{y}{d} \passign a \,
\mid \, \palloc{x} \,
\mid \, \pfree{x}
\\
& \mid a \passign b \, 
\mid \, a \passign f(\vec{c}) \, 
\mid \, \passume \, (\cond) \,
\mid \, \stmt \, ;\, \stmt \\
&
\mid \, \pif \, (\cond) \, \pthen \, \stmt \, \pelse \, \stmt \,
\mid \, \pwhile \, (\cond) \, \stmt \\
\cond ::=& 
\, x = y \,
\mid \, a = b \,
\mid \, \cond \lor \cond \,
\mid \, \neg \cond
\end{align*}

Our programs have well-typed assignments to variables using values stored in other variables
($x \passign y$ and $a \passign b$) or using 
pointer dereferences from location variables, either to
the data sort ($a \passign \pderef{y}{d}$) or 
to the location sort ($x \passign \pderef{y}{p}$),
or using function computations in the data sort ($a \passign f(\vec{c})$).
Further, programs can update fields 
($\pderef{y}{d} \passign a$ or $\pderef{y}{p} \passign x$),
and can dynamically allocate ($\palloc{x}$) or deallocate ($\pfree{x}$) memory.
In addition, they allow the usual constructs of imperative 
programming --- empty statements ($\pskip$), conditionals ($\pif-\pthen-\pelse$)
and loops ($\pwhile$).
Conditionals in programs can be Boolean combinations of (well-typed)
equality atoms over location or data variables.

\subsection{Program Executions}
\seclabel{program-execution}

Executions of programs over $\Sigma$ and variables $V$ (given by the $\stmt$ grammar) 
are finite sequences over the alphabet $\Pi$
given below. 
\begin{align*}
\Pi = \setpred{\dblqt{ x \passign y }, \dblqt{x \passign \pderef{y}{p}}, \dblqt{\pderef{y}{p} \passign x}, \dblqt{a \passign \pderef{y}{d}}, \dblqt{\pderef{y}{d} \passign a}, \dblqt{\palloc{x}}, \dblqt{\pfree{x}}, \dblqt{a \passign b}, \\
\dblqt{a \passign f(\vec{c})}, \dblqt{\passume (x = y) }, \dblqt{\passume (x \neq y)},
\dblqt{\passume (a = b)}, \dblqt{\passume (a \neq b)}\\
}{x, y \in V_\locT, p \in \fields_\locT, d \in \fields_\dataT, f \in \Ff_\dataT
a, b \in V_\dataT, \text{ and }\vec{c} \text{ a tuple over }V_\dataT}.
\end{align*}

The set of executions of a program $P$, 
denoted $\exec(P)$ is given by a regular expression, inductively 
defined below. We assume that conditionals are in negation normal form
where $\dblqt{\passume(\neg(r = s))}$ 
translates to $\dblqt{\passume(r \neq s)}$
and $\dblqt{\passume(\neg(r \neq s))}$ 
translates to $\dblqt{\passume(r = s)}$.
\begin{align*}
\begin{array}{rcll}
\exec(\pskip) &=& \epsilon &  \\
\exec(a) &=& a & \text{if } a \in \Pi \\
\exec(\passume(c_1 \lor c_2)) &=& \exec(\passume(c_1)) + \exec(\passume(c_2)) & c_1, c_2 \in \cond \\
\exec(\passume(c_1 \land c_2)) &=& \exec(\passume(c_1))\cdot \exec(\passume(c_2)) & c_1, c_2 \in \cond \\
\exec(\pif \, (c) \, \pthen \, s_1 \, \pelse \, s_2) &=& 
\begin{aligned}
&\exec(\passume(c)) \cdot \exec(s_1) \\
&+ \, \exec(\passume(\neg c)) \cdot \exec(s_2)
\end{aligned} & 
\begin{aligned} 
&c \in \cond, \\ &s_1, s_2 \in \stmt
\end{aligned}\\
\exec(\pwhile\, (c)\, s) &=&
\begin{aligned}
&(\exec(\passume(c))\cdot\exec(s))^*\\
&\cdot\, \exec(\passume(\neg c))
\end{aligned} & 
\begin{aligned}
&c \in \cond, \\
&s \in \stmt
\end{aligned}\\
\exec(s_1;s_2) &=& \exec(s_1) \cdot \exec(s_2) & s_1, s_2 \in \stmt \\
\end{array}
\end{align*}
The set of partial executions of a program $P$, denoted $\pexec(P)$, is the set of
prefixes of its executions.

\subsection{Semantics of Executions}
\seclabel{prog-semantics}

The semantics of a heap-manipulating program is given
in terms of the behavior of its executions on \emph{heap structures}.

\subsubsection{Heap Structures}
\seclabel{heap-structure}
A $\Sigma$-heap structure is a tuple 
$\Mm = (U_\locT, U_\dataT, \Ii)$, where 
$U_\locT$ is a universe of locations,
$U_\dataT$ is a universe of data ($U_\locT \cap U_\dataT = \emptyset$)
and $\Ii$ is some interpretation of the various symbols
in $\Sigma$.
In order to faithfully model dynamic memory allocation,
we assume that the set of locations is the disjoint union of a statically allocated set
of locations and a \emph{countably infinite} set of locations that can be allocated dynamically. 
That is, we have $U_\locT = \staticuniv \uplus \dynamicuniv$,
where $\dynamicuniv = \set{e_0, e_1, \ldots}$ is an ordered
set of distinguished locations indexed by the set of natural numbers $\nats$.
The interpretation $\Ii$
maps every constant $c \in \Cc_\locT$ to an element from $\staticuniv$,
every constant in $\Cc_\dataT$ to an element from $U_\dataT$,
every function symbol 
$f \in \Ff_\locT$ to an element of $[U_\locT \to U_\locT]$,
symbol $f \in \Ff_\dataT$ of arity $r$ to an element of 
$[(U_\dataT)^r \to U_\dataT]$,
and, $f \in \Ff_\locTodataT$ to an element of $[U_\locT \to U_\dataT]$.
Further, we assume that the elements in $\dynamicuniv$
cannot be accessed from $\staticuniv$, i.e.,
for every $f \in \Ff_\locT$, we have
$\forall e \in \staticuniv, \Ii(f)(e) \in \staticuniv$. This ensures that the set of locations reachable by any execution cannot access locations in the dynamic universe that have not been allocated by the execution. Moreover, for every $f \in \Ff_\locT$ and every $e \in \dynamicuniv$, we have $f(e) = e$. Finally, we assume in $\Sigma$ the presence of a distinguished
constant symbol  $\dynconst$ and a distinguished
function symbol $\dynptr$
not used in the syntax $\stmt$ of programs.
We require that
for every heap structure $\Mm = (\staticuniv \uplus \dynamicuniv, U_\dataT, \Ii)$
the interpretation function $\Ii$ of $\Mm$
assigns interpretations to these symbols such that
$\Ii(\dynconst) = e_0$ and $\Ii(\dynptr) \in [\dynamicuniv \to \dynamicuniv]$
with $\Ii(\dynptr)(e_i) = e_{i+1}$ for every $i \in \nats$.

\subsubsection{Valuation of Variables and Pointers in an Execution.}

We assume that corresponding to each program variable $x\in V$,
there is a distinguished constant $\init{x} \in \Cc_\locT \uplus \Cc_\dataT$
of the appropriate sort denoting the initial value of $x$.
Likewise, each field $p \in \fields_\locT$ 
(resp. $d \in \fields_\dataT$) is also associated with
a unary function $\init{p} \in \Ff_\locT$ (resp. $\init{d} \in \Ff_\locTodataT$).

Given an execution $\sigma \in \Pi^*$ of a program $P \in \stmt$
and a $\Sigma$-heap structure $\Mm = (U_\locT, U_\dataT, \Ii)$, the valuation of
the program variables and field pointers at the end of $\sigma$ are defined 
in terms of valuation functions 
$\val_\locT : \Pi^* \times V_\locT \to U_\locT$,
$\val_\dataT : \Pi^* \times V_\dataT \to U_\dataT$,
$\fldval_\locT : \Pi^* \times \fields_\locT \to [U_\locT \to U_\locT]$
and $\fldval_\dataT : \Pi^* \times \fields_\dataT \to [U_\locT \to U_\dataT]$,
which are presented next. 
In the following, $\allocs(\sigma)$ denotes
the number of occurrences of statements of the form $\dblqt{\palloc{\cdot}}$ in $\sigma$.

\begin{align*}
\begin{array}{rcl}
\val_\locT(\varepsilon, u) &=& \Ii(\init{u}) \\ 
\val_\locT(\sigma \cdot s, u) &=& 
\begin{cases} 
\val_\locT(\sigma, y) & \text{if } s = \dblqt{x \passign y} \text{ and } u = x \\
e_{i-1} & \text{if } s = \dblqt{\palloc{x}}, i = \allocs(\sigma) \\
&\text{ and } u = x \\
\fldval_\locT(\sigma, p)(\val_\locT(\sigma, y)) & \text{if } s = \dblqt{x \passign \pderef{y}{p}} \text{ and } u = x \\  
\val_\locT(\sigma, u) & \text{otherwise}
\end{cases}
\end{array}
\end{align*}

\begin{align*}
\begin{array}{rcl}
\val_\dataT(\varepsilon, v) &=& \Ii(\init{v}) \\ 
\val_\dataT(\sigma \cdot s, v) &=& 
\begin{cases} 
\val_\dataT(\sigma, b) & \text{if } s = \dblqt{a \passign b} \text{ and } v = a \\
\Ii(f)(\val_\dataT(\sigma, c_1), \ldots, \val_\dataT(\sigma, c_r)) & \text{if } s = \dblqt{a \passign f(c_1, \ldots, c_r)} \\
& \text{ and } v = a \\
\fldval_\dataT(\sigma, d)(\val_\locT(\sigma, y)) & \text{if } s = \dblqt{a \passign \pderef{y}{d}} \text{ and } v = a \\  
\val_\dataT(\sigma, v) & \text{otherwise}
\end{cases}
\end{array}
\end{align*}

\begin{align*}
\begin{array}{rcl}
\fldval_\locT(\varepsilon, p) &=& \Ii(\init{p}) \\
\fldval_\locT(\sigma\cdot s, p) &=& 
\begin{cases}
\fldval_\locT(\sigma, q)[\val_\locT(\sigma, y) \mapsto \val_\locT(\sigma, x)] & \text{ if } s = \dblqt{\pderef{y}{q} \passign x} \\
& \text{ and } p = q\\
\fldval_\locT(\sigma, p) & \text{otherwise}
\end{cases}
\end{array}
\end{align*}

\begin{align*}
\begin{array}{rcl}
\fldval_\dataT(\varepsilon, d) &=& \Ii(\init{d}) \\
\fldval_\dataT(\sigma\cdot s, d) &=& 
\begin{cases}
\fldval_\dataT(\sigma, h)[\val_\locT(\sigma, y) \mapsto \val_\dataT(\sigma, a)] & \text{ if } s = \dblqt{\pderef{y}{h} \passign a}\\
& \text{ and } d = h\\
\fldval_\dataT(\sigma, d) & \text{otherwise}
\end{cases}
\end{array}
\end{align*}

where by $f[a \mapsto b]$ we mean the function $g$ defined as $g(a) = b$ and $g(x) = f(x)$ otherwise.


\paragraph{Feasibility.} An execution $\sigma$ is said to be \textbf{feasible} on $\Mm$
if for every prefix of $\sigma$ of the form 
$\rho' \cdot \dblqt{\passume(x = y)}$,
we have 
$\val_\sortT(\rho', x) = \val_\sortT(\rho', y)$,
and
for every prefix of $\sigma$ of the form 
$\rho' \cdot \dblqt{\passume(x \neq y)}$,
we have 
$\val_\sortT(\rho', x) \neq \val_\sortT(\rho', y)$,
where $\sortT \in \set{\locT, \dataT}$ is the sort of both $x$ and $y$.


\section{Defining \Coherence~ For Heap-Manipulating Programs}
\seclabel{define-coherence}

In this section we discuss some of the challenges involved in
coming up with a reasonable extension for the notion of
\coherence~ as defined in~\cite{coherence2019}.
A key problem in defining such an extension, and indeed in generally handling programs with updatable maps, lies in
keeping accurate track of \emph{aliasing} between 
variables of the location sort. We will first discuss some examples that highlight these challenges and proffer a first solution to the problem. Then we will discuss our solution more formally by introducing relevant notation and define our notion of \emph{\alaw}~
executions and programs that captures the essence of the aliasing problem. Finally, we will discuss the notion of \coherence~ adapted
to the case of heap-manipulating programs.


\subsection{The Importance of Being Alias Aware for Programs Updating Maps}
\seclabel{examples-challenges}

Functions and updatable maps cannot be handled uniformly; in particular, the work of~\cite{coherence2019} does not immediately lend itself to handling updatable maps. Let us illustrate this using the following example.

\begin{example} 
\incctr
Consider a straight-line program that generates execution $\pi_1$, where
\begin{align*}
\pi_\ctr \delequal
\cd{z}_1 \passign \pderef{\cd{x}}{\cd{next}}
\cdot\, 
\passume(\cd{z}_1 \neq \cd{z}_2)
\cdot\,
\pderef{\cd{y}}{\cd{next}} \passign \cd{z}_2
\cdot\,
\cd{z}_3 \passign \pderef{\cd{x}}{\cd{next}}
\end{align*}
In the execution above we do not, and in fact cannot, know the value stored in the variable $\cd{z}_3$ unless we know whether $\cd{x} = \cd{y}$ or $\cd{x} \neq \cd{y}$. In the former case, we will have that $\cd{z}_3 = \cd{z}_2$, since $\cd{y}$ \emph{aliases} $\cd{x}$ and, therefore, the update of the $\cd{next}$ pointer on $\cd{y}$ will have over-written the value of the pointer on $\cd{x}$. Similarly, in the latter case we will have that $\cd{z}_3 = \cd{z}_1$, since the pointer update on $\cd{y}$ will have no effect on $\cd{x}$.

This is a major difference, since in any heap structure on which $\pi_\ctr$ is feasible,
it must be that $\cd{z}_1 \neq \cd{z}_2$. Therefore, 
whether $\cd{x}$ and $\cd{y}$ are aliases of each other
can have a drastic effect on the semantics and feasibility 
of the execution. 
\end{example}

The above example illustrates that aliasing plays a crucial role in the semantics and feasibility of executions. There is, however, a second (related) issue as well.

\begin{example} 
\incctr
\incctr
Consider the executions $\pi_2$ and $\pi_3$ where
\begin{align*}
&\pi_2 \delequal \pi_1 \cdot \passume(\cd{z}_2 = \cd{z}_3)
\,\,\,\,\,\,\,
\,\,\,\,\,\,\, \pi_3 \delequal \pi_1 \cdot \passume(\cd{z}_2 \neq \cd{z}_3)
\end{align*}
\end{example}

As discussed earlier, execution $\pi_2$ is only feasible in models where $\cd{x} = \cd{y}$ and $\pi_3$ only in models where $\cd{x} \neq \cd{y}$ where at the end of $\pi_1$ both kinds of models were feasible. Therefore one can have \emph{implied} equalities and disequalities between variables that are only known much later in the execution. In general, this is hard to keep track of in a streaming setting (which we wish to do in order to use the underlying ideas in~\cite{coherence2019}) and can require an unbounded amount of memory, as can be seen in the following example.

\begin{example}
\incctr
Consider the following execution
\begin{align*}
\pi'_\ctr \delequal
&\passume(\cd{x} \neq \nil)
\cdot
\passume(\cd{y} \neq \nil)
\cdot
\cd{z}_1 \passign \pderef{\cd{x}}{\cd{next}}
\cdot 
\passume(\cd{z}_1 \neq \cd{z}_2)
\cdot
\pderef{\cd{y}}{\cd{next}} \passign \cd{z}_2 \\
&\cdot
\cd{z}_3 \passign \pderef{\cd{x}}{\cd{next}}
\cdot
\cd{k}_1 \passign \pderef{\cd{x}}{\cd{key}}
\cdot 
\cd{k}_2 \passign \pderef{\cd{y}}{\cd{key}}
\cdot 
\underbrace{
\cd{k}_1 \passign \cd{f}(\cd{k}_1) \cdot \cd{k}_2 \passign \cd{f}(\cd{k}_2) \cdots \cd{k}_1 \passign \cd{f}(\cd{k}_1) \cdot \cd{k}_2 \passign \cd{f}(\cd{k}_2)
}_{n}\\
& \cdot
\passume(\cd{z}_2 = \cd{z}_3)
\cdot
\passume(\cd{k}_1 = \cd{k}_2)
\end{align*}
\end{example}

The above execution is feasible, but would require an unbounded amount of memory to reason as such in a streaming fashion. Next we will see a solution to this aliasing problem that will allow us to keep track of relationships between variables and the correct semantics of an execution.

\subsection{Alias-Aware Executions and Programs}
\seclabel{alias-aware-def}

In~\secref{examples-challenges}, we saw that unlike the work of~\cite{coherence2019} we cannot define what term (or value) is computed by a variable. 
We also observed that the main issue with programs that have updatable maps is \emph{aliasing} --- i.e., when a pointer or data field is updated on a variable, the update may also be true for a different variable that \emph{aliases} the original variable. 
In this section we identify a class of executions, called
\alaw~executions, which implicitly resolves any aliasing when
updating the pointer fields on heap locations.

Below, we formally define \alaw~executions.
We denote by $\heaps(\sigma)$ the set of heap structures
on which the execution $\sigma$ is feasible.
\begin{definition}
\deflabel{alias-aware}
Let $\rho \in \Pi^*$ be an execution. $\rho$ is said to be \alaw~if for every prefix of $\rho$
of the form $\sigma \cdot \dblqt{\pderef{x}{h} \passign u}$
(the sorts of $u$ and $h$ are compatible),
and for every location variable $y \in V_\locT$, 
one of the following hold
\begin{itemize}
	\item For every heap structure $\Mm = (U_\locT, U_\dataT, \Ii) \in \heaps(\sigma)$, $\val_\locT(\sigma, x) = \val_\locT(\sigma, y)$.
	\item For every heap structure $\Mm = (U_\locT, U_\dataT, \Ii) \in \heaps(\sigma)$, $\val_\locT(\sigma, x) \neq \val_\locT(\sigma, y)$.
\end{itemize}
\end{definition}

Intuitively, the above definition says that for an \alaw~ execution, at a map update we know all the relevant aliasing information (even in the uninterpreted sense), because the variables that alias to the variable on which the update is being performed are the same in \emph{every} feasible model.

\begin{definition}[\Alaw~ programs]
A program $s \in \stmt$ is said to be \alaw~ if every execution $\rho \in \exec(s)$ is \alaw.
\end{definition}

\subsection{Term Computed by a Variable during an Execution}

Before we define \coherence~in our setting,
we require the notion of \emph{terms} that an execution
computes. 
The notion of terms lets us reason about infinitely many
heap structures in a  symbolic fashion and was
crucially exploited in~\cite{coherence2019} to define
\coherence~and in arguing the correctness of the
decision procedure designed for \coherent~programs.

We would like to define the term associated with a program variable
in an execution. There are two major challenges in our way. The first challenge is that, in order to accurately determine
the \emph{value} pointed to by a variable during an execution on a concrete heap structure,
one needs to accurately keep track of 
the interpretations of pointer fields (that could have, in turn, been updated earlier in the execution). A similar problem needs to be addressed in order to successfully define the notion of terms. The second challenge is that unlike in concrete heap structures where two distinct elements
in the heap are known to be unequal, terms do not behave the same way.
For the case of terms, we would like to explore
the possibility that two terms that might be syntactically
different might still be semantically \emph{equivalent}.
While this subtlety can be dealt with easily when none of the
functions are updatable, as in~\cite{coherence2019}, greater care is required in our setting.
Let us first formalize some notation and then elaborate
these subtleties.

For a set $\Cc$ of constant symbols and a set $\Ff$
of function symbols from $\Sigma$,
we use $\Terms(\Cc, \Ff)$ (or just $\Terms$ when the signature is clear) to denote the
set of (well-sorted) ground terms constructed using the constants 
in $\Cc$ and function symbols in $\Ff$.

For a binary relation $R \subseteq \Terms\times \Terms$,
the \emph{congruence closure} of $R$, denoted $\cong_R$
is the smallest equivalence relation such that 
(i) $R \subseteq \cong_R$, and 
(ii) for every function $f$ of sort $w_1w_2\cdots w_r,w$ ($w, w_i \in \set{\locT, \dataT}$)
	and terms $t_1, t'_1, t_2, t'_2, \ldots, t_r, t'_r$
	of sorts $w_1, w_1, w_2, w_2, \ldots, w_r, w_r$, 
	we have
	$\Big( \bigwedge\limits_{i=1}^r (t_i, t'_i) \in \cong_R \Big)
	{\implies}(f(t_1, \ldots, t_r), f(t'_1, \ldots, t'_r)) \in \cong_R$. 
We say $t_1 \cong_{R} t_2$ if $(t_1, t_2) \in \cong_R$, where $t_1,t_2$ are terms.

We now define the terms associated with variables
$\comp : \Pi^* \times V \to \Terms$,
the interpretations of updatable maps on terms corresponding to variables
$\mapcomp : \Pi^* \times \fields \times V \to \Terms$, and the set of equalities accumulated by the execution
$\alpha : \Pi^* \to \Pp(\Terms \times \Terms)$.
Recall that, in the following, $\dynconst \in \Cc_\locT$ and 
$\dynptr \in \Ff_\locT$ are special symbols in our vocabulary.

\begin{align*}
\begin{array}{rcl}
\comp(\varepsilon, u) &=& \init{u} \\
\comp(\sigma \cdot s, u) &=& 
\begin{cases} 
\comp(\sigma, y) & \text{if } s = \dblqt{x \passign y} \text{ and } u = x \\
\dynptr^i(\dynconst) & 
\begin{aligned}
&\text{if } s = \dblqt{\palloc{x}}, \\
&i = \allocs(\sigma), \text{ and } u = x 
\end{aligned}\\
\mapcomp(\sigma, h, y) & 
\text{if } s = \dblqt{x \passign \pderef{y}{h}} \text{ and } u = x \\  
f(\comp(\sigma, z_1), \ldots, \comp(\sigma, z_r)) & \text{if } s = \dblqt{x \passign f(z_1, \ldots, z_r)} \text{ and } u = x \\  
\comp(\sigma, u) & \text{otherwise}
\end{cases}
\end{array}
\end{align*}

\begin{align*}
\begin{array}{rcl}
\mapcomp(\varepsilon, h, z) &=& \init{h}(\init{z}) \\
\mapcomp(\sigma \cdot s, h, z) &=& 
\begin{cases} 
\comp(\sigma, x) & 
\begin{aligned}
\begin{array}{l}
\text{ if } s = \dblqt{\pderef{y}{h} \passign x} \\ 
\text{ and } \comp(\sigma, z) \cong_{\alpha(\sigma)} \comp(\sigma, y) 
\end{array}
\end{aligned}
\\
\mapcomp(\sigma, h, z) & \text{otherwise}
\end{cases}
\end{array}
\end{align*}

\begin{align*}
\begin{array}{rcl}
\alpha(\varepsilon) &=& \emptyset \\
\alpha(\sigma \cdot s) &=& 
\begin{cases} 
\alpha(\sigma) \cup \set{(\comp(\sigma, x), \comp(\sigma, y))} & \text{if } s = \dblqt{\passume(x = y)} \\
\alpha(\sigma) & \text{otherwise}
\end{cases}
\end{array}
\end{align*}

The set of terms computed by an execution $\sigma$
is the set $\Terms(\sigma) = \setpred{\comp(\rho, x)}{x \in V, \rho \text{ is a prefix of } \sigma}$.

Let us discuss some aspects of the above definition here.
The effect of some of the commands in the executions is
obvious and similar to~\cite{coherence2019}.
At the beginning of an execution, each term $x \in V$
is associated with a unique constant term $\init{x}$,
each pointer $h$ assigns every term $t$ 
(of the location sort) to $\init{h}(t)$,
and the set of equalities accumulated is $\emptyset$.
On an assignment $\dblqt{x \passign y}$,
the term stored in $x$ is updated to be that stored in $y$.
On an assignment involving a (non-updatable) function
($x \passign f(z_1 ,\ldots, z_r)$), the execution computes the term
$f(t_1, \ldots, t_r)$, and stores it in $x$, where $t_i$ is the term
associated with $z_i$ before the assignment.

Let us now look at the other commands.
Recall that we model our set of dynamically allocated 
locations as a disjoint set from the set of statically allocated
locations.
The term associated with $x$ after an $\dblqt{\palloc{x}}$
follows this premise. It is built using $n$ applications
of a distinct function $\dynptr \in \Ff_\locT$ 
on a distinct constant $\dynconst \in \Cc_\locT$,
where $n$ is the number of allocations performed before
this point in the execution.
Let us now turn to the most subtle aspect: \emph{pointer updates}.
When the execution encounters a command of the form
$\dblqt{\pderef{y}{h} \passign x}$, we need to not only
update the term pointed to by $\pderef{t_y}{h}$ 
(where $t_y$ is the term associated with $y$),
we also need to update the term pointed to by
$\pderef{t}{h}$, when $t$ can be inferred to be equivalent
to $t_y$ using the equalities $\alpha(\cdot)$ accumulated so far.

\begin{example}
\exlabel{terms_subtlety}
\incctr
Consider the following execution. 
\begin{align*}
\pi_\ctr \delequal
\passume(\cd{x} = \cd{y})
\cdot
\pderef{\cd{x}}{\cd{p}} \passign \cd{z}_1
\cdot
\cd{z}_2 \passign \pderef{\cd{y}}{\cd{p}}
\cdot
\passume(\cd{z}_1 \neq \cd{z}_2)
\end{align*}
Here, the terms associated with the pointers $\pderef{x}{p}$
and $\pderef{y}{p}$ are $\init{p}(\init{x})$ and
$\init{p}(\init{y})$ respectively.
After the $\dblqt{\passume(\cd{x} = \cd{y})}$,
the locations pointed to by $\cd{x}$ and $\cd{y}$ are the same
in every heap structure on which this assumption is valid,
and thus the terms corresponding to $\cd{x}$ and $\cd{y}$
must be deemed \emph{equivalent}.
This means that the pointer update $\pderef{\cd{x}}{\cd{p}} \passign \cd{z}_1$ 
should in fact be reflected in the term associated with the
pointer $\pderef{\cd{y}}{\cd{p}}$.
The above definition of $\comp(\cdot)$ and $\mapcomp(\cdot)$
will, in fact, ensure that
$\comp(\pi_\ctr, \cd{z}_1) = \comp(\pi_\ctr, \cd{z}_2)$.
Not doing so will, in turn, not reflect the contradiction
due to the last statement of this execution,
which makes it infeasible in every heap structure. 
\end{example}

It turns out that the above definition of terms associated with
variables is only a \emph{best effort}, in that,
it may not accurately summarize all heap 
structures in which the execution
is feasible.
The following example illustrates why this is the case.

\begin{example}
\exlabel{terms_incorrect}
\incctr
Let us consider the following, which
is a permutation of the execution in~\exref{terms_subtlety}.
\begin{align*}
\pi_\ctr \delequal
\pderef{\cd{x}}{\cd{p}} \passign \cd{z}_1
\cdot
\cd{z}_2 \passign \pderef{\cd{y}}{\cd{p}}
\cdot
\passume(\cd{x} = \cd{y})
\cdot
\passume(\cd{z}_1 \neq \cd{z}_2)
\end{align*}
Similar to the execution in~\exref{terms_subtlety},
the execution $\pi_\ctr$ is infeasible in every heap structure.  
However, in this case,
the terms associated with $\cd{z}_1$ and $\cd{z}_2$
will be 
$\comp(\pi_\ctr, \cd{z}_1) = \init{\cd{z}_1}$
and  $\comp(\pi_\ctr, \cd{z}_2) = \init{\cd{p}}(\init{\cd{y}})$,
which are not equal,
even when considering the equivalence
induced due to $\alpha(\pi_\ctr) = \set{(\init{\cd{x}}, \init{\cd{y}})}$.
As a result, it is hard to conclude that
$\pi_\ctr$ is an infeasible execution by analyzing the
terms associated with variables alone.
\end{example}

However, for the case of an \alaw~execution (\defref{alias-aware}), 
the definition of terms above indeed accurately relates the terms associated with each variable to their values in every heap structure that makes the execution
feasible. This brings us to the first main result of the paper that motivates the definition of \alaw~executions. It simply states that the interpretation of the terms defined above on a given model coincides with the actual values computed on the model.

\begin{theorem}
\thmlabel{terms-sanity}
Let $\sigma$ be an \alaw~ execution and let $\Mm = (U_\locT, U_\dataT, \Ii) \in \heaps(\sigma)$ be a heap structure on which $\sigma$ is feasible.
Then, $\Ii(\comp(\sigma, x)) = \val_\locT(\sigma, x)$ for every variable $x \in V_\locT$ and
$\Ii(\comp(\sigma, a)) = \val_\dataT(\sigma, a)$ for every $a \in V_\dataT$. 

Moreover, $\Ii(\mapcomp(\sigma,h,x)) = \fldval_\locT(\sigma,h)(\val_\locT(\sigma,x))$ and $\Ii(\mapcomp(\sigma,d,x)) = \fldval_\dataT(\sigma,d)(\val_\locT(\sigma,x))$ for every $h \in \fields_\locT, d \in \fields_\dataT$ and $x \in V_\locT$.
\end{theorem}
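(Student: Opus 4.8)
The plan is to prove all four equalities simultaneously by strong induction on the length of $\sigma$, keeping the variable equations and the map ``moreover'' clause together in a single invariant, since they are mutually dependent: a read $x \passign \pderef{y}{h}$ pushes a map term into a variable term, while a field update consumes variable terms to rebuild map terms. Two facts are used throughout. First, since $\sigma$ is feasible on $\Mm$, every prefix $\rho$ of $\sigma$ is feasible on $\Mm$ (so $\Mm \in \heaps(\rho)$) and every prefix of an \alaw~execution is again \alaw, so the hypothesis is available for every shorter prefix and for $\Mm$. Second, $\Ii$ is a homomorphism on ground terms, $\Ii(f(t_1,\dots,t_r)) = \Ii(f)(\Ii(t_1),\dots,\Ii(t_r))$, and $\Ii(\dynptr^{\,i}(\dynconst)) = e_i$ by the constraints imposed on heap structures. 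Writing $\sigma = \rho \cdot s$, I would then case split on the last letter $s$.

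For the straight-line statements ($x\passign y$, the reads $x\passign\pderef{y}{p}$ and $a\passign\pderef{y}{d}$, the data operations $a\passign b$ and $a\passign f(\vec{c})$, and $\palloc{x}$), I would match each defining clause of $\comp$ and $\mapcomp$ against the corresponding clause of $\val$ and $\fldval$, and conclude by unfolding the definitions, applying the hypothesis at $\rho$ for the variables and fields that are read, and pushing $\Ii$ through by the homomorphism property (for $\palloc{x}$ using $\Ii(\dynptr^{\,i}(\dynconst)) = e_i$). The $\passume$ statements leave $\comp$, $\mapcomp$, $\val$, and $\fldval$ unchanged on all arguments, so the four equalities are inherited verbatim; their only effect is to grow $\alpha$. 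For that I would record a soundness lemma: if $t_1 \cong_{\alpha(\rho)} t_2$ then $\Ii(t_1) = \Ii(t_2)$. Indeed each generating pair of $\alpha(\rho)$ is $(\comp(\tau,u),\comp(\tau,w))$ for a prefix $\tau\cdot\passume(u=w)$ of $\rho$; feasibility of $\sigma$ forces $\val(\tau,u)=\val(\tau,w)$, and the hypothesis at $\tau$ turns this into $\Ii(\comp(\tau,u))=\Ii(\comp(\tau,w))$, so the kernel of $\Ii$ (an equivalence closed under function application) contains all of $\cong_{\alpha(\rho)}$.

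The heart of the argument is the field update $s = \pderef{y}{h}\passign x$, the only place \alaw-ness is used. Here $\comp$ and all variable values are unchanged, so only the map equation must be re-established, and it reduces to one claim for each location variable $z$: the \emph{symbolic} test $\comp(\rho,z) \cong_{\alpha(\rho)} \comp(\rho,y)$ that drives $\mapcomp$ agrees with the \emph{semantic} alias $\val_\locT(\rho,z) = \val_\locT(\rho,y)$ in $\Mm$ that drives $\fldval$. Given this agreement, the two defining clauses select matching entries ($\comp(\rho,x)$ versus the old map value) and the hypothesis at $\rho$ closes the case. The forward direction of the claim is exactly the soundness lemma. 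For the converse I would invoke \alaw-ness, which at this update guarantees that $y$ and $z$ are either equal in \emph{all} of $\heaps(\rho)$ or unequal in all of $\heaps(\rho)$; hence, whenever $\comp(\rho,z) \not\cong_{\alpha(\rho)} \comp(\rho,y)$, it suffices to rule out the ``all-equal'' alternative by producing one feasible heap structure that separates $z$ and $y$.

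The main obstacle is constructing that separating model, and this is precisely where updatable maps make the situation harder than the non-updatable setting of the cited work. I would build the canonical term model $\Mm_0$ whose location and data universes are the $\cong_{\alpha(\rho)}$-classes of $\Terms(\rho)$ (augmented with the dynamic universe and, where necessary, fresh elements to totalize the initial field interpretations $\init{h}$), interpreting every constant and function symbolically; the congruence property makes these well defined on classes, and the static/dynamic split together with the constraints on $\dynconst$ and $\dynptr$ are arranged by construction. Feasibility of $\rho$ on $\Mm_0$ is verified prefix by prefix: equality assumptions hold because their terms are identified in $\alpha(\rho)$, while a disequality assumption can never collapse, since if its two terms were $\cong_{\alpha(\rho)}$ the soundness lemma would force them equal in the given $\Mm$, contradicting feasibility of $\sigma$. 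Once $\Mm_0 \in \heaps(\rho)$, applying the induction hypothesis to the \emph{shorter} prefix $\rho$ and the model $\Mm_0$ yields $\val(\rho,v) = \Ii_0(\comp(\rho,v)) = [\comp(\rho,v)]$ for every variable, so non-congruence genuinely separates $z$ from $y$ in $\Mm_0$ and the dichotomy is resolved. The delicate points to get right are the totalization of the updatable field interpretations on classes not of field-term shape, and the bookkeeping of the dynamic universe, so that $\Mm_0$ is a legal $\Sigma$-heap structure; the data-field update $\pderef{y}{d}\passign a$ is entirely analogous, as the relevant aliasing is still between location variables.
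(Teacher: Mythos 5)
Your proposal follows essentially the same route as the paper's proof: induction on the length of the execution in which the only nontrivial case is the field update, resolved by showing that the congruence test $\comp(\rho,z) \cong_{\alpha(\rho)} \comp(\rho,y)$ driving $\mapcomp$ agrees with semantic aliasing, because the term model of $\cong_{\alpha(\rho)}$-classes (the paper's ``initial model'' $\Mm_{\alpha(\rho)}$) is feasible on the prefix and alias-awareness then transfers the aliasing pattern from that one model to all feasible models. The paper packages your outer strong induction plus inner prefix-by-prefix feasibility check as a single strengthened inductive hypothesis (feasibility of $\Mm_{\alpha(\rho)}$ and $\Mm_{\alpha(\sigma)}$ on each prefix, together with the correspondence between variable values and term equivalence classes), but the content is the same.
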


\begin{proof}
The proof proceeds by induction on the length of the execution. The base case and most inductive cases are trivial by the definitions of $\comp,\val_\locT,\val_\dataT$, etc. The only nontrivial case is that of map update. Consider an execution of the form $\sigma \cdot \dblqt{\pderef{y}{h} \passign x}$. It is easy to see that definitions of $\fldval_\dataT$ and $\fldval_\locT$ update the map on the location pointed to by $y$. However, the definition of $\comp$ updates the appropriate term for the map $h$ only on variables whose terms are equivalent modulo $\alpha(\sigma)$. We simply establish that for \alaw~ executions the equivalence modulo $\alpha(\sigma)$ is sufficient to determine which variables alias $y$. This is owing to the fact that the initial model of terms modulo the congruence closure of $\alpha(\sigma)$ is feasible on $\sigma$. And in that model, the only variables that alias $y$ are those that compute terms equivalent to $\comp(\sigma,y)$ modulo $\alpha(\sigma)$. Since $\sigma$ is \alaw, we have that in any model the variables that alias to $y$ are the same as those that alias to $y$ in the initial model.

We establish that the initial model of an \alaw~ execution $\sigma$ is a feasible model of $\sigma$, by strong induction on the length of the execution, where we strengthen our inductive hypothesis. We present the stronger inductive hypothesis below, the proof of which is a simple mutual (structural) induction on the execution.

Given an \alaw~ execution $\sigma$ that is feasible (i.e., feasible in some model) and a sub-execution $\rho \leq \sigma$:
\begin{itemize}
    \item $\rho$ is feasible on $\Mm_{\alpha(\rho)}$, where $\Mm_{\alpha(\rho)}$ is the initial term modulo the congruence closure of $\alpha(\rho)$.
    \item $\paramval{\Mm_{\alpha(\rho)}}(\rho, x) = \eqcl{\comp(\rho,x)}{\alpha(\rho)}$, where $\eqcl{}{\alpha(\rho)}$ represents the equivalence class of terms modulo the congruence defined by $\alpha(\rho)$. 
    
    Similarly, $\paramfldval{\Mm_{\alpha(\rho)}}_\locT(\rho, h)(\paramval{\Mm_{\alpha(\rho)}}_\locT(\rho, x)) = \eqcl{\mapcomp(\rho,h,x)}{\alpha(\rho)}$
    
    and $\paramfldval{\Mm_{\alpha(\rho)}}_\dataT(\rho, d)(\paramval{\Mm_{\alpha(\rho)}}_\locT(\rho, x)) = x\eqcl{\mapcomp(\rho,d,x)}{\alpha(\rho)}$
    \item $\Mm_{\alpha(\sigma)}$ is feasible on $\rho$.
    \item $\paramval{\Mm_{\alpha(\sigma)}}(\rho, x) = \eqcl{\comp(\rho,x)}{\alpha(\sigma)}$
    
    Similarly, $\paramfldval{\Mm_{\alpha(\sigma)}}_\locT(\rho, h)(\paramval{\Mm_{\alpha(\sigma)}}_\locT(\rho, x)) = \eqcl{\mapcomp(\rho,h,x)}{\alpha(\sigma)}$
    
    and $\paramfldval{\Mm_{\alpha(\sigma)}}_\dataT(\rho, d)(\paramval{\Mm_{\alpha(\sigma)}}_\locT(\rho, x)) = \eqcl{\mapcomp(\rho,d,x)}{\alpha(\sigma)}$
\end{itemize}
\end{proof}




\subsection{\Coherent~Programs}
\seclabel{coherence-prelim}

Having defined the concept of terms associated with
variables in an execution, we can now extend the
notion of \emph{coherence}, adapting it from~\cite{coherence2019}
to accommodate updatable pointers as in the heap-manipulating
programs that we study.

In the following, we say that $s$ is a superterm of $t$
modulo some congruence relation $\cong$
if there are terms $s'$ and $t'$ such that
$t \cong t'$, $s'$ is a superterm of $t'$
and $s' \cong s$.
\begin{definition}[\Coherence]
\deflabel{coherence}
A complete or partial execution $\sigma$ is said to be a \coherent~ execution if it satisfies the  following two conditions.
\begin{description}
    \item[\textbf{Memoizing.}]
    Let $\rho$ be a prefix of
    $\sigma$ such that it is of the form 
    $\rho'\cdot \dblqt{x \passign \pderef{y}{h}}$ or 
    $\rho'\cdot \dblqt{x \passign f(\vec{y})}$ and let 
    $t_x = \comp(\rho,x)$. 
    If there is a term $t' \in \Terms(\rho')$ such that $t' \cong_{\alpha(\rho')} t_x$, then it must be 
    the case that there is some variable $z \in V$ such that 
    $\comp(\rho',z) \cong_{\alpha(\rho')} t_x$.
    
    \item[\textbf{Early Assumes.}] 
    Let $\rho$ be a prefix of the form 
    $\rho'\cdot \dblqt{\passume(x = y)}$ and 
    let $t_x = \comp(\rho',x), t_y = \comp(\rho',y)$. 
    If there is a term $t \in \Terms(\rho')$ 
    such that $t$ is a superterm of 
    $t_x$ or $t_y$ modulo $\cong_{\alpha(\rho')}$, 
    then there must be some variable $w \in V$ such that 
    $\comp(\rho',w) \cong_{\alpha(\rho')} t$.
\end{description}
Note that these conditions are applicable to every sort-sensible combination of symbols.
\end{definition}

As mentioned earlier, this definition of \coherence~ is inspired from 
the notion of coherence defined previously in~\cite{coherence2019}.
The \emph{memoizing} restriction is the heart of the notion of \coherent~executions. 
Let us illustrate with a simple example.
\begin{example}
\exlabel{memoizing}
\incctr
Let $\pi_\ctr$ be the following execution:
\begin{align*}
\pi_\ctr \delequal 
 \cd{u} \passign \cd{f(w)}
\cdot \underbrace{\cd{u} \passign \cd{f(u)} \cdots \cd{u} \passign \cd{f(u)}}_{n}
\cdot \cd{v} \passign \cd{f(w)}
\cdot \underbrace{\cd{v} \passign \cd{f(v)} \cdots \cd{v} \passign \cd{f(v)}}_{n}
\cdot \passume (\cd{u} \neq \cd{v})
\end{align*}
It is easy to see that in the above execution $\pi_\ctr$,
the values of the variables $\cd{u}$ and \cd{v} will be equal
after executing all the $2n + 2$ assignments, in any heap structure.
As a result of the last assumption $\dblqt{\passume (\cd{u} \neq \cd{v})}$,
the execution
$\pi_\ctr$ will thus be infeasible in all heap structures.

However, in order to accurately determine the 
relationship $\dblqt{\cd{u} = \cd{v}}$ at the end of the execution, 
one needs to keep track of an unbounded amount of 
information (as $n$ can be increased unboundedly).
This is the crucial insight in the first condition (\emph{memoizing}),
which states that when a term has been computed and \emph{dropped}
(i.e., no variable points to the term),
then the execution should not recompute this term.
Indeed, the above execution $\pi_\ctr$ does not meet
this requirement --- the execution computed
the term $t = \cd{f}(\init{\cd{w}})$
in its first half and stored it in $\cd{u}$,
re-assigned this variable $\cd{u}$ immediately after
computing $t$, thereby losing all copies of $t$,
and then later recomputed this term again in its second half.
In fact, each of the terms $\cd{f}^i(\cd{w})$ ($1\leq i < n$)
have been recomputed without retaining their original copies.
Clearly, $\pi_\ctr$ is not a coherent execution.
\end{example}

A similar example highlights the importance of the second coherence restriction (\emph{early} assumes).
\begin{example}
\exlabel{early-assume}
\incctr
Let $\pi_\ctr$ be the following execution:
\begin{align*}
\pi_\ctr \delequal 
\cd{u} \passign \cd{u}_0
\cdot \overbrace{\cd{u} \passign \cd{f(u)} \cdots \cd{u} \passign \cd{f(u)}}^{n}
\cdot \cd{v} \passign \cd{v}_0
\cdot \overbrace{\cd{v} \passign \cd{f(v)} \cdots \cd{v} \passign \cd{f(v)}}^{n}
&\cdot \passume (\cd{u}_0 = \cd{v}_0)\\
&\qquad\qquad \cdot \passume (\cd{u} \neq \cd{v})
\end{align*}
Observe that this execution is similar to the execution in~\exref{memoizing}.
Also observe that, as before, this execution is infeasible in any heap structure.
However, any algorithm that would accurately determine 
this in a \emph{streaming} fashion would require access to unbounded memory.
\end{example}

\begin{definition}
A program is said to be \coherent~if all its executions are \coherent.
\end{definition}



\section{Assertion Checking for Coherent \Alaw~ Programs}
\seclabel{alias-aware-coherent-verification}

In this section we discuss our first main result --- decidability
of assertion checking for programs that are both \alaw~ as well
as coherent.

Let us first define the problem of assertion checking.
For this, we augment our programs with a special statement
`\passert(\pfalse)', and thus our new syntax is 
given by the following grammar ($\stmt$ is as defined in~\secref{program-syntax}).
\begin{align*}
\stmt_\text{assert} ::=& 
\,\,  \stmt \, 
\mid \, \passert(\pfalse) \, 
\mid \, \stmt_\text{assert} \, ;\, \stmt_\text{assert}
\end{align*}
Observe that more complex assertions 
(including boolean combinations of equality/disequality assertions) 
can be expressed by translating them to conditional statements.
For example, the assertion `$\passert(x = y)$' would translate
to the program 
`$\pif \;(x \neq y)\;\pthen\;\passert(\pfalse)\;\pelse\;\pskip$'.

The set of executions for programs with assertions
consists of sequences over the alphabet
$\Pi_\text{assert} = \Pi \cup \set{\dblqt{\passert (\pfalse)}}$
and can be defined as in \secref{program-execution},
with the addition of the following rule: $\exec(\passert(\pfalse)) = \passert (\pfalse)$

The functions $\val_\locT, \val_\dataT, \fldval_\locT$
and $\fldval_\dataT$ in the presence of $\dblqt{\passert (\pfalse)}$
are defined as before for execution prefixes without
$\dblqt{\passert (\pfalse)}$, and can be assumed to map
all elements in their domain to a special symbol $\bot$
for all executions containing $\dblqt{\passert (\pfalse)}$.
The feasibility of a partial execution on the alphabet
$\Pi_\text{assert} \setminus \set{\dblqt{\passert (\pfalse)}}$
on a heap structure $\Mm$ is defined as before (\secref{prog-semantics})
and is undefined otherwise.

\begin{definition}[Assertion Checking for Heap-Manipulating Programs]
The problem of assertion checking for a program $s \in \stmt_\text{assert}$
is to check whether for every heap structure $\Mm$ and for 
every partial execution of $s$ of the form
$\sigma \cdot \dblqt{\passert (\pfalse)}$, we have that $\sigma$ is infeasible
on $\Mm$.
\end{definition}

We first note that the above problem is undecidable in general, a direct consequence of~\cite{coherence2019}. Indeed, uninterpreted
programs are heap-manipulating programs
that do not mention any heap variables.
\begin{theorem}
Assertion checking for heap-manipulating programs is undecidable.
\end{theorem}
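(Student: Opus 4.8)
The plan is to exhibit a straightforward reduction from the undecidable verification problem for general (not necessarily coherent) uninterpreted programs established in~\cite{coherence2019}. The key observation, already hinted at in the statement, is that uninterpreted programs sit inside our class of heap-manipulating programs as precisely those programs that never touch the location sort, so the reduction is essentially the identity embedding, and the only real work is to check that this embedding preserves the answer.

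First I would identify the embedding syntactically. Call a program $s \in \stmt_\text{assert}$ \emph{data-only} if it mentions no location variable from $V_\locT$, no pointer field from $\fields$, and none of the statements $\palloc{x}$, $\pfree{x}$, $x \passign \pderef{y}{p}$, $\pderef{y}{p} \passign x$, $a \passign \pderef{y}{d}$, or $\pderef{y}{d} \passign a$; that is, it is built only from $\pskip$, $a \passign b$, $a \passign f(\vec{c})$, $\passume(a = b)$, $\passert(\pfalse)$, sequencing, conditionals, and loops over data conditions. Syntactically these are exactly the uninterpreted programs of~\cite{coherence2019} over the data signature $(\Cc_\dataT, \Ff_\dataT)$, and they form a fragment of $\stmt_\text{assert}$. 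Thus the reduction maps an uninterpreted program to itself, viewed as a data-only heap-manipulating program.

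Next I would verify that the reduction preserves the answer. Fix a data-only program and an execution of the form $\sigma \cdot \dblqt{\passert(\pfalse)}$. Since $\sigma$ never reads or writes any location, pointer, or allocation, the functions $\val_\locT$, $\fldval_\locT$, and $\fldval_\dataT$ are irrelevant, and the feasibility of $\sigma$ on a heap structure $\Mm = (U_\locT, U_\dataT, \Ii)$ depends only on the data reduct $(U_\dataT, \proj{\Ii}{\dataT})$, which is an arbitrary first-order structure over $(\Cc_\dataT, \Ff_\dataT)$, i.e.\ exactly a model in the uninterpreted setting. Conversely, every such data model extends to a legal heap structure: pick any nonempty $\staticuniv$ and interpret all location constants in it, adjoin the prescribed infinite $\dynamicuniv = \{e_0, e_1, \dots\}$ with $\Ii(\dynconst) = e_0$ and $\Ii(\dynptr)(e_i) = e_{i+1}$, and interpret the location symbols subject to the stated closure constraints (e.g.\ the identity on $\staticuniv$ and $f(e)=e$ on $\dynamicuniv$), none of which affect $\sigma$. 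Hence $\sigma$ is feasible on \emph{some} heap structure iff it is feasible on some data model, so the data-only program violates its assertion over heap structures iff the underlying uninterpreted program violates it over data models.

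Finally I would conclude: since the verification problem for general uninterpreted programs is undecidable by~\cite{coherence2019}, and that problem reduces to assertion checking for heap-manipulating programs via the identity embedding (whose correctness is the equivalence just argued), the latter is undecidable. I do not anticipate a genuine obstacle; the only point requiring care is this semantic equivalence, namely that the extra location structure of a heap model is both irrelevant to data-only executions and always satisfiable, so that the two existential quantifications over models coincide. If a self-contained argument is preferred to a black-box reduction, the same conclusion follows by replaying the~\cite{coherence2019} reduction from the halting problem of two-counter machines directly inside the data sort, encoding counter values as terms $f^i(c)$, increments as $a \passign f(a)$, and counter semantics through data (dis)equality assumptions.
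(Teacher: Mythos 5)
Your proposal is correct and takes essentially the same approach as the paper: the paper's proof is the one-line observation that uninterpreted programs are precisely heap-manipulating programs that mention no heap variables, so undecidability is inherited directly from~\cite{coherence2019}. Your additional verification---that feasibility of a data-only execution depends only on the data reduct and that every data model extends to a legal heap structure---just fills in the details the paper leaves implicit.
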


Finally, we state our first decidability result.
In the following, an \alaw~ coherent program is a program that
is both coherent and \alaw.

\begin{theorem}[Decidability of Uninterpreted Assertion Checking]
\thmlabel{alaw-coherent-assertion-decidable}
Assertion checking is decidable for the class of \alaw~\coherent~ programs and is $\pspc$-complete.
\end{theorem}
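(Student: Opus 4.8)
The plan is to adapt the automata-theoretic decision procedure of~\cite{coherence2019} to our richer setting with updatable maps and dynamic allocation, using \thmref{terms-sanity} as the bridge between the symbolic term computation ($\comp$, $\mapcomp$, $\alpha$) and the semantic values in feasible heap structures. The starting point is that, by \thmref{terms-sanity}, for an \alaw~execution the symbolic terms faithfully capture the values in \emph{every} feasible model, and moreover the initial term model modulo the congruence closure of $\alpha(\sigma)$ is itself feasible whenever $\sigma$ is. Consequently, feasibility of an \alaw~execution becomes a purely congruence-theoretic condition: $\sigma$ is feasible iff for every prefix of the form $\rho' \cdot \dblqt{\passume(x \neq y)}$ we have $\comp(\rho', x) \not\cong_{\alpha(\rho')} \comp(\rho', y)$. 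Thus assertion checking reduces to detecting whether some feasible prefix of an execution of the program ends in $\dblqt{\passert(\pfalse)}$, which is a question about the incremental congruence structure an execution induces as it is read left to right.

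Next I would define a finite \emph{configuration} that records exactly the congruence information needed to continue the computation in a streaming fashion: a partition of $V$ recording which variables currently hold congruent terms, the disequalities discovered so far between these classes, and, for each class and each field $h \in \fields$, the class (if any) currently reached by $\mapcomp(\cdot, h, \cdot)$; each $\palloc{}$ contributes a fresh class mutually distinct from all others (the terms $\dynptr^i(\dynconst)$). The central lemma is that for \coherent~executions this bounded configuration suffices, i.e.\ it can be updated deterministically on reading each symbol of $\Pi_\text{assert}$ and correctly tracks $\cong_{\alpha(\cdot)}$ on all live terms. The \emph{memoizing} condition guarantees that whenever a statement recomputes a previously seen term, that term is still stored in some variable, so its class is present in the configuration; the \emph{early-assumes} condition guarantees that when an assume merges two classes, every superterm needed for congruence propagation is likewise still stored, so the merge completes entirely within the configuration without resurrecting dropped terms. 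For a map update $\dblqt{\pderef{y}{h} \passign x}$, \alaw-ness (via \thmref{terms-sanity}) ensures the set of classes whose $h$-field must change is exactly the class of $y$, making the update well defined on the configuration.

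These configurations are finite in number---singly exponential in $|V|$ and polynomial in $|\fields|$ and the program's function symbols---so they serve as the states of a finite automaton $\Aa$ that reads executions, updates the configuration by the rules above, moves to a sink \emph{infeasible} state when a disequality assume contradicts the current congruence, and accepts precisely the feasible prefixes ending in $\dblqt{\passert(\pfalse)}$. Since $\exec(s)$ is regular, assertion checking amounts to emptiness of $\exec(s)$ intersected with the language of $\Aa$. For the \pspc~upper bound, each configuration is representable in space polynomial in $|V|$ and $|\fields|$, and a witnessing accepted execution can be searched for nondeterministically while storing only the current configuration together with a position in the regular program structure; hence the problem is in $\mathsf{NPSPACE} = \pspc$ by Savitch's theorem, with running time linear in the size of the program and exponential in the number of variables. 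For \pspc-hardness I would observe that a program with no field operations is exactly an uninterpreted program and is vacuously \alaw, so our class contains the \coherent~uninterpreted programs of~\cite{coherence2019}, whose assertion-checking problem is already \pspc-hard; the reduction is therefore inherited.

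The crux of the argument---and the genuine departure from~\cite{coherence2019}---is the central lemma that the bounded configuration faithfully maintains $\cong_{\alpha(\cdot)}$ across map updates. I expect the delicate step to be showing that \coherence~and \alaw-ness \emph{jointly} prevent any dropped term from ever being needed again: memoizing and early-assumes must be shown to close the configuration under the congruence propagation triggered both by equality assumes and by map writes, while \thmref{terms-sanity} is invoked to certify that the configuration's notion of aliasing at each map update coincides with the true aliasing in every feasible model. I would establish this invariant by induction on execution length, simultaneously for $\comp$ and $\mapcomp$, which is the technical heart of the proof.
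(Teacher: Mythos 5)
Your proposal is correct and takes essentially the same approach as the paper: the paper's (much terser) proof likewise invokes \thmref{terms-sanity} to reduce feasibility of \alaw~executions to congruence reasoning over terms, extends the streaming congruence closure automaton of~\cite{coherence2019} to coherent executions with map updates and allocation (the finite-configuration construction you spell out, matching the state structure $(\rcong, d, P, \ldots)$ that the paper makes explicit only for the memory-safety automaton in \secref{automaton}), and inherits $\pspc$-hardness from uninterpreted programs without updates, which are vacuously \alaw. The difference is one of detail rather than substance: where the paper says the algorithm of~\cite{coherence2019} ``can be extended directly,'' you supply the configuration, its update rules, and the role of memoizing, early assumes, and alias-awareness in the correctness invariant.
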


The proof of the above result relies on the following observations. First, for alias-aware executions, the terms associated with variables reflect their relationships on all heap structures (\thmref{terms-sanity}). Second, in this case, the streaming congruence closure algorithm for checking feasibility of coherent executions introduced in~\cite{coherence2019} can be extended directly to the case of coherent executions of heap-manipulating programs. The $\pspc$-hardness follows from the $\pspc$-hardness result for uninterpreted programs without updates (\cite{coherence2019}). As a consequence of these observations, we also obtain the following result:

\begin{theorem}
The problem of checking membership in the class of coherent and \alaw~ programs is decidable and is in $\pspc$.
\end{theorem}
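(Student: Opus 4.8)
The plan is to reduce membership in the \coherent-and-\alaw\ class to a non-emptiness (reachability) question over a finite automaton, extending the streaming congruence-closure construction of~\cite{coherence2019}. A program $P$ fails to be \coherent\ and \alaw\ exactly when some execution in $\exec(P)$ has a prefix that is the \emph{first} to violate the memoizing or early-assumes conditions of~\defref{coherence}, or the aliasing condition of~\defref{alias-aware}. Since $\exec(P)$ is regular and presented by a control automaton of size linear in $P$, it suffices to build an automaton $\Aa$ over $\Pi$ that reads an execution letter by letter, maintains a bounded abstraction of the prefix seen so far, and moves to a distinguished flagged state precisely when it detects the first such violation. Then $P$ is \coherent\ and \alaw\ iff the product of $\Aa$ with the control automaton of $P$ cannot reach a flagged state.

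The abstraction stored in a state of $\Aa$ is a bounded \emph{configuration} summarizing the live fragment of $\comp$, $\mapcomp$, and $\alpha$. Concretely it records the partition of $V$ into congruence classes modulo $\cong_{\alpha(\sigma)}$ (which variables currently hold congruent terms); for each location class and field $h \in \fields$, the class of its $h$-successor when that successor is held by some variable; the entailed disequality relation among classes, closed under congruence; and a marker on each location class stating whether it holds a statically- or dynamically-allocated term. The crucial point, inherited from~\cite{coherence2019}, is that along a \coherent\ prefix this is enough: the memoizing condition guarantees that every live term is congruent to a term held by a variable, so the successor structure is complete, and the configuration is a sound, bounded abstraction updatable in one step per input letter. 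The genuinely new ingredients are the map-successor component, updated on $\dblqt{\pderef{y}{h}\passign x}$ exactly as in the definition of $\mapcomp$, and the disequality component, which must track not only disequalities derived by congruence from explicit $\dblqt{\passume(\cdot\neq\cdot)}$ atoms but also the \emph{structural} disequalities forced among distinct $\dblqt{\palloc{\cdot}}$ terms $\dynptr^i(\dynconst)$ and between such terms and non-allocated ones.

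Each check is then a local read of the configuration. For memoizing, on $\dblqt{x\passign\pderef{y}{h}}$ or $\dblqt{x\passign f(\vec z)}$ we compute the class of the new term and test whether it coincides with a previously live term no longer represented by any variable. For early assumes, on $\dblqt{\passume(x=y)}$ we test the superterm condition against the live successor structure. For alias-awareness, on $\dblqt{\pderef{x}{h}\passign u}$ we test, for every location variable $y$, whether $x,y$ lie in the same class (forced equal) or in disequal classes (forced unequal), flagging if some $y$ falls in neither. Soundness of these tests rests on~\thmref{terms-sanity}: up to the first violation the prefix is both \coherent\ and \alaw, so the configuration faithfully reflects the forced equalities (those captured by $\cong_{\alpha(\sigma)}$) and, by completeness of congruence closure for equality with uninterpreted functions, the forced disequalities are exactly those recorded in the closed disequality relation; hence ``$x$ and $y$ alias in every $\Mm\in\heaps(\sigma)$'' coincides with ``$\comp(\sigma,x)\cong_{\alpha(\sigma)}\comp(\sigma,y)$'', and dually for disequality. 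Finally, each configuration is representable in space polynomial in $|V|$ and $|\fields|$, so although $\Aa$ has exponentially many states, reachability of a flagged state in the product with $P$'s control automaton can be decided by a nondeterministic on-the-fly search storing only the current configuration and control location---polynomial space. Since $\pspc=\mathsf{NPSPACE}$ by Savitch's theorem, membership is in $\pspc$.

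The step I expect to be the main obstacle is the interdependence just exploited: the configuration is a faithful abstraction only while the prefix is \alaw\ (this is exactly what~\thmref{terms-sanity} provides), yet \alaw-ness is itself one of the properties being decided. The argument must therefore be staged so that at each step the aliasing and coherence tests are evaluated against a configuration already known to be valid---because no earlier violation occurred---and any configuration update that relies on validity is applied only after the tests pass. Proving that detecting the first violation in this staged manner is both sound and complete, together with the completeness argument for forced disequalities (including the structural allocation disequalities, which require that merging two location classes never silently contradict an entailed disequality among non-represented successors), is the technically delicate core of the proof.
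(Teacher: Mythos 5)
Your overall route is the same as the paper's: a streaming congruence-closure automaton whose states are bounded configurations (variable partition, successor map, disequalities, allocation markers), detection of the \emph{first} violation so that \thmref{terms-sanity} justifies the tests performed at each step, a product with the program's control automaton, and an on-the-fly nondeterministic search giving $\pspc$ via Savitch. The paper derives this theorem as a corollary of exactly these observations, and your staging discussion makes explicit something the paper leaves implicit. However, there is a genuine gap in the state you maintain.

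Your configuration stores only the \emph{live} fragment of $\comp$, $\mapcomp$, and $\alpha$: successors are recorded only ``when that successor is held by some variable.'' With such a state, neither coherence test can actually be performed. A memoizing violation is precisely the recomputation of a term that is \emph{no longer} held by any variable, and a live-fragment state cannot distinguish ``$f$ was applied to this argument class earlier and the result was dropped'' from ``$f$ was never applied to this class'': in both cases the successor map is undefined on that tuple and no variable holds the result, so the flagged state is unreachable and your automaton accepts incoherent programs. Likewise, your early-assumes test ``against the live successor structure'' is vacuous: any superterm found through the live successor map is by construction held by some variable, so the condition of \defref{coherence} is automatically satisfied there, while the actual violations (dropped superterms of $\comp(\rho',x)$ or $\comp(\rho',y)$) are invisible. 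What is missing---and what the paper adds, both in the automaton $\Aa_\checkkoherence$ of \thmref{caching-decidable} and in its implementation---is a history component: for every function/pointer symbol $f$ and every tuple of argument classes of appropriate sort, a boolean recording whether $f$ has ever been applied to that tuple along the execution. With these bits both tests become local reads of the state (bit set but no variable holds a congruent result $\Rightarrow$ violation); the bits can soundly be kept only for tuples of currently live classes, because under first-violation staging a recomputation whose argument class had died would itself have been preceded by a flagged recomputation of that argument; and the state remains polynomial-space (the bits are indexed by symbols and class tuples of bounded arity), so the rest of your complexity argument goes through unchanged.
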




\section{Memory Safety, Forest Data-Structures, and \Koherence}
\seclabel{forest}


We now tackle the problem of memory safety verification for heap-manipulating programs. 
The goal here is, given a program and an allocated set of locations (defined using a reachability specification), to check whether the program only dereferences (using pointer fields) locations that are in the (dynamically changing) set of allocated locations. We develop a technique for when the initial heap holds a \emph{forest data-structure} (disjoint lists and tree-like structures). We define a subclass of programs, called \koherent, for which memory safety is decidable. The key idea is to utilize the fact that the initial heap is a forest data-structure in order to make programs alias aware. 

For an execution that works over a forest data-structure,
one can accurately infer the aliasing relationship between two program variables by tracking
if they point to locations on the heap obtained by traversing the same path (i.e., starting from the same initial location and taking the same pointers at each step).
Intuitively, in a forest data-structure, two distinct traversals always lead to \emph{different} locations.
In addition, when the program execution does a single pass over the data-structure, 
one can keep track of the aliasing relationship between variables
(or equivalently, whether the locations pointed to by two variables are same or not)
using a \emph{streaming} algorithm.
The notion of \koherence\ essentially ensures such a single pass.
Finally, updatable maps can be used to keep track of the initial allocated set and the allocations/frees performed by the program during its execution. Consequently, memory safety can be reduced to assertion checking.
 

This section is organized as follows. ~\secref{reachability-specification} introduces the reachability specifications that specify the initial set of allocated locations, and defines the memory safety problem formally. We also show here that memory safety is undecidable in general, and undecidable even for coherent programs. 
~\secref{forest-datastructure-definition} defines the class of forest data-structures, and shows that memory safety remains undecidable for programs that start with a heap that holds a forest data-structure.~\secref{koherent-definition} defines the notion of \koherent~  programs. 



\subsection{Reachability Specification and Memory Safety}
\seclabel{reachability-specification}

\paragraph{Reachability Specification}
Heap-manipulating programs can be annotated by a 
\emph{reachability specification} that restricts the allowable nodes
that can be accessed by a program.
A reachability specification is an indexed set of 
triples $\reachspec = \set{\reachspec_k}_{k=1}^n$
where $\reachspec_i = (\startlocs_i, \reachptrs_i, \stoplocs_i)$
is such that
$\startlocs_k \subseteq \Cc_\locT$,
$\stoplocs_k \subseteq \Cc_\locT$
and $\reachptrs_k \subseteq \Ff_\locT$.
Each triple $\reachspec_i$ denotes a set of locations
$\reach_i$, which is the least set that contains $\startlocs_i$, does not include $\stoplocs_i$
and is closed under repeated applications of pointer fields $\reachptrs_i$.
More formally, given a heap structure $\Mm$ with interpretation $\Ii$,
$\reach_i$ gives a set of locations, which is the smallest set such that
\begin{enumerate*}[label=(\alph*)]
\item $\setpred{\Ii(c)}{c \in \startlocs_i} 
\setminus \setpred{\Ii(c)}{c \in \stoplocs_i} \subseteq \reach_i$, and 
\item for every $e \in \reach_i$ and for every $p \in \reachptrs_i$,
if $\Ii(p)(e) \not\in \setpred{\Ii(c)}{c \in \stoplocs_i}$, then
$\Ii(p)(e) \in \reach_i$.
\end{enumerate*}
We let $\reach_\reachspec = \bigcup\limits_{i=1}^n \reach_i$.
Often the heap structure $\Mm$ will be implicit and we will omit mentioning it.

\paragraph{Allocated Nodes}
Starting with a reachability specification $\reachspec$
on a given heap structure $\Mm$, an execution $\sigma$
defines a set of \emph{allocated nodes}, which we
denote as $\alloc(\sigma)$ and define as follows.
\begin{align*}
\begin{array}{rcl}
\alloc(\varepsilon) &=& \reach_\reachspec \\
\alloc(\sigma \cdot s) &=& 
\begin{cases}
\alloc(\sigma) \cup \set{\val_\locT(\sigma \cdot s, x)} & \text{ if } s = \dblqt{\palloc{x}}\\
\alloc(\sigma) \setminus \set{\val_\locT(\sigma, x)} & \text{ if } s = \dblqt{\pfree{x}}\\
\alloc(\sigma) & \text{otherwise}
\end{cases}
\end{array}
\end{align*}


\paragraph{Memory Safety}
An execution $\sigma$ is said to \emph{violate memory safety} over a
heap structure $\Mm$ with respect to a reachability specification $\reachspec$
if there is a prefix $\rho' = \rho \cdot s$ of $\sigma$
such that $\rho$ is feasible over $\Mm$
and one of the following holds.
\begin{enumerate}
    \item $s$ dereferences a location that was not allocated.
    That is, $s$ is of the form 
    $\dblqt{w \passign \pderef{y}{h}}$ or
    $\dblqt{\pderef{y}{h} \passign w}$, 
    $y \in V_\locT$ and $w$ and $h$ are variables and pointer fields 
    of appropriate sorts,
    such that
    $\val_\locT(\rho, y) \not\in \alloc(\rho)$.

    \item $s$ frees an unallocated location. That is, $s$ is of the form $\pfree{x}$ and $\val_\locT(\rho, x) \not\in \alloc(\rho)$.
\end{enumerate}
An execution $\sigma$ is \emph{memory safe} over $\Mm$ with respect to
$\reachspec$ if it does not violate memory safety over $\Mm$ with respect to $\reachspec$.
With this, we can now define the memory safety verification problem.
In the following, we fix our signature $\Sigma$.
\begin{definition}[Memory Safety Verification]
The memory safety verification problem asks, given a program $P \in \stmt$
and a reachability specification $\reachspec$, whether
for all heap structures $\Mm$, each execution $\sigma \in \exec(P)$ is memory safe over $\Mm$ with 
respect to $\reachspec$.
\end{definition}

We show, unsurprisingly, that checking memory safety is
undecidable in general.

\begin{theorem}[Undecidability of Memory Safety]
\thmlabel{memsafe_undec}
Memory safety verification is undecidable.
\end{theorem}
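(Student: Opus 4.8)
The plan is to reduce from the undecidability of assertion checking for heap-manipulating programs established above, which, as noted there, already holds for \emph{uninterpreted} programs---heap-manipulating programs that mention no heap variables, and hence contain no pointer dereference, no $\palloc{\cdot}$, and no $\pfree{\cdot}$. Given such a dereference/allocation/free-free program $P \in \stmt_\text{assert}$, I will construct an instance $(P', \reachspec)$ of the memory safety verification problem so that $P$ \emph{passes} assertion checking if and only if $P'$ \emph{is} memory safe with respect to $\reachspec$. Since the source problem is undecidable, this yields the claim.

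The construction has two ingredients. First, I take $\reachspec$ to be the empty reachability specification (equivalently, one whose start sets are empty), so that $\reach_\reachspec = \emptyset$ on every heap structure. Since $P'$ will contain no $\palloc{\cdot}$ and no $\pfree{\cdot}$, the allocated set stays empty along every execution, i.e. $\alloc(\rho) = \emptyset$ for every prefix $\rho$. Second, fixing a location variable $\cd{x}$ and a location field $\cd{p} \in \fields_\locT$, I replace every occurrence of $\passert(\pfalse)$ in $P$ by the single dereference statement $\cd{z} \passign \pderef{\cd{x}}{\cd{p}}$, with $\cd{z}$ fresh. Apart from these injected statements, $P'$ contains no pointer dereference (because $P$ was dereference-free) and no frees, so the only way $P'$ can violate memory safety is to execute an injected dereference.

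Correctness then follows from two easy observations. Because none of the added machinery (the idle variables $\cd{x}, \cd{z}$ and the empty specification) affects the data-sort computation or any equality assumption, the feasibility of a prefix of $P'$ stopping just before an injected dereference coincides exactly with the feasibility of the corresponding prefix of $P$ stopping just before the matching $\passert(\pfalse)$; this is read directly off the semantics in \secref{prog-semantics}. Moreover, whenever such a feasible prefix $\rho$ is reached, the injected statement dereferences $\val_\locT(\rho, \cd{x})$, which lies outside $\alloc(\rho) = \emptyset$, and so violates memory safety (condition (1)); conversely every memory-safety violation of $P'$ is an injected dereference and hence exposes a feasible prefix of $P$ reaching $\passert(\pfalse)$. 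Thus there exist a heap structure $\Mm$ and a feasible partial execution of $P$ ending in $\passert(\pfalse)$ if and only if there exist a heap structure and an execution of $P'$ violating memory safety. Equivalently, $P$ fails assertion checking iff $P'$ is memory-unsafe, completing the reduction.

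I expect the only delicate point to be justifying that the reduction preserves feasibility \emph{exactly}---that forcing the allocated set to be empty converts ``control reaches the assertion along a feasible path'' into ``an unallocated location is dereferenced,'' introducing no spurious memory-safety violations elsewhere. This is secured by keeping $P'$ free of all dereferences except the injected ones and free of all frees, so that the ``free unallocated'' clause (condition (2)) never fires and the ``dereference unallocated'' clause fires precisely at the former assertion points.
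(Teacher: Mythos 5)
Your reduction is correct and is essentially the paper's own argument: the paper also exploits a reachability specification whose reachable set is empty (it uses $(\{\init{x}\},\{p\},\{\init{x}\})$, where start equals stop) and injects a dereference so that reaching it along a feasible prefix is exactly a memory-safety violation. The only cosmetic difference is that the paper reduces from the feasibility problem of \cite{coherence2019} (Theorem~11 there) by appending a single statement $y \passign \pderef{x}{p}$ at the end of $P$, whereas you reduce from the paper's assertion-checking undecidability theorem by replacing each $\passert(\pfalse)$ with a dereference---two trivially interreducible formulations of the same idea.
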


\begin{proof}
In~\cite{coherence2019}, the authors consider uninterpreted programs, which are programs that
have variables taking values in a data domain that is uninterpreted; programs in~\cite{coherence2019}
don't have heap variables, and do not modify heaps. It was shown (Theorem~11 in \cite{coherence2019}) that given
an uninterpreted program $P$, the problem of determining if there is a data domain $\Mm$ and an execution $\rho$ 
of $P$, such that $\rho$ is feasible in $\Mm$, is undecidable. Our result here can be proved by a simple reduction
from that problem. Let $P$ be an uninterpreted program. Consider the reachability specification $\varphi = 
(\{\init{x}\},\{p\},\{\init{x}\})$
such that $x \in V_\locT$ is a new variable. Consider program $P' = P; y = \pderef{x}{p}$. Observe that $P'$ is memory
safe with respect to $\varphi$ if and only if $P$ does not have a feasible execution with respect to some data model. 
\end{proof}



Given the undecidability result in~\thmref{memsafe_undec}, we need to identify a restricted subclass of programs and initial heap structures for which the problem of verifying memory safety is decidable. This leads us to the notions of forest data-structures and {\koherent} programs. In some sense, forest data-structures are natural classes of heap structures where the \alaw~ restrictions become very minimal. We combine these remaining restrictions with our usual notion of \coherence~ to introduce a new notion of \koherent~ programs. We define these below. 


\subsection{Forest Data-Structures}
\seclabel{forest-datastructure-definition}


Let us now define our characterization of heap structures that will be amenable to
our decidability result.
The main restriction is stated in bullet~\itemref{forest-main} below and
intuitively disallows two different paths to the same location on the heap.
\begin{definition}[Forest Data-structures]
\deflabel{forest}
A heap structure $\Mm = (U_\locT, U_\dataT, \Ii)$ over a 
signature $\Sigma$ is said to be a forest data-structure with respect 
to a reachability specification $\reachspec = \set{\reachspec_k}_{k=1}^n$ if
\begin{enumerate}
	\item\itemlabel{forest-singleton-stop} for every $1\leq i\leq n$, each set of stopping locations is a singleton set of the form $\stoplocs_i = \set{\stoploc_i}$,
	\item\itemlabel{forest-stop-pointers} for every $c \in \bigcup\limits_{k=1}^n\stoplocs_k$,
	and for every $f \in \Ff_\locT$, we have that $\Ii(f(c)) = \Ii(c)$,
     and
	\item\itemlabel{forest-main} for every
$t_i \in \Terms(\startlocs_i, \reachptrs_i) \cup \stoplocs_i$ and 
$t_j \in \Terms(\startlocs_j, \reachptrs_j) \cup \stoplocs_j$
we have, 
if $\Ii(t_i) = \Ii(t_j)$, then
either $t_i = t_j \in \startlocs_i \cap \startlocs_j$, or
$\Ii(t_i) = \Ii(t_j) = \Ii(\stoploc_i) = \Ii(\stoploc_j)$.
\end{enumerate}
\end{definition}
Intuitively, a heap structure is a forest data-structure
with respect to $\reachspec$, if the subgraph $G_i$ 
induced by the set of nodes (excluding $\stoplocs_i$) reachable from $\startlocs_i$,
using any number of pointers from $\reachptrs_i$ forms a tree,
and further, any two subgraphs $G_i$ and $G_j$ do not have a node in common (except possibly for the starting locations).
Notice that~\defref{forest} do not impose any restrictions on the elements of
the data sort of a heap structure.

Our notion of forest data-structures handles the aliasing 
problem while still being able to express many practical reachability specifications. Consider the following example similar to our pathological example from~\secref{examples-challenges}.

\begin{example} 
$\pi_2 = \pi'_2 \cdot \passume(\cd{z}_2 = \cd{z}_3)$ where
\begin{align*}
&\pi'_2 \delequal
\passume(\cd{x} \neq \nil)
\cdot
\passume(\cd{y} \neq \nil)
\cdot
\cd{z}_1 \passign \pderef{\cd{x}}{\cd{next}}
\cdot 
\passume(\cd{z}_1 \neq \cd{z}_2)
\cdot
\pderef{\cd{y}}{\cd{next}} \passign \cd{z}_2
\cdot \\
&\;\;\;\;\cd{z}_3 \passign \pderef{\cd{x}}{\cd{next}}
\end{align*}

With respect to forest data-structures the above pathological 
execution $\pi_2$ is, in fact, infeasible. 
To understand this, observe that
for a forest data-structure, if two different locations 
($\cd{x}$ and $\cd{y}$ in the execution $\pi_2$ for instance) 
are not equal to the stopping location 
($\cd{x}\neq\nil$ and $\cd{y}\neq\nil$), 
then these locations are also different from each other 
(and thus $\cd{x}\neq\cd{y}$). 
This means that in the execution $\pi_2$, the update 
$\dblqt{\pderef{\cd{y}}{\cd{next}} \passign \cd{z}_2}$ 
will not affect the value of $\dblqt{\pderef{\cd{x}}{\cd{next}}}$.
Now, when the execution reads the value of  
$\dblqt{\pderef{\cd{x}}{\cd{next}}}$ in $\cd{z}_3$, 
it is expected to be the same as earlier (i.e., some location as pointed to by $\cd{z}_1$), 
but since $\cd{z}_1 \neq \cd{z}_2$, 
we must have $\cd{z}_3 \neq \cd{z}_2$. 
This means that the last assume $\passume(\cd{z}_2 = \cd{z}_3)$ makes the execution infeasible.
\end{example}
 
However, the notion of forest data-structures, by itself, is not enough to ensure decidability, as is shown by the following result.

\begin{theorem}
\label{memsafe_undec_forest}
The memory safety verification problem for forest data-structures is undecidable.
\end{theorem}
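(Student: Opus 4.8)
The plan is to reuse, essentially verbatim, the reduction from the proof of \thmref{memsafe_undec}, and to add the one new observation that the instances it produces can be taken to be forest data-structures. Recall that the cited result from~\cite{coherence2019} says that, given a purely data-manipulating uninterpreted program $P$ — one using only variables in $V_\dataT$ and functions in $\Ff_\dataT$, and touching no heap location, pointer field, allocation, or free — it is undecidable whether $P$ admits a feasible execution over some data model. Given such a $P$, I would reuse the reachability specification $\reachspec$ and the program $P' = P \,;\, \cd{y} \passign \pderef{\cd{x}}{\cd{p}}$ from that earlier proof, in which $\startlocs = \stoplocs = \set{\init{\cd{x}}}$ for a fresh $\cd{x}\in V_\locT$ and fresh field $\cd{p}$. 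The goal is to show that $P'$ is memory safe over \emph{all forest data-structures} with respect to $\reachspec$ if and only if $P$ has no feasible execution, which transfers the undecidability.

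The crux is to verify two points. First, the heap structures needed for the reduction are genuinely forest data-structures with respect to $\reachspec$. Since $\startlocs = \stoplocs = \set{\init{\cd{x}}}$, we get $\reach_\reachspec = \emptyset$, so $\alloc$ begins empty; condition~\itemref{forest-singleton-stop} holds because $\stoplocs$ is a singleton; condition~\itemref{forest-stop-pointers} is arranged by taking $\Ii$ to interpret every location field as a self-loop at $\Ii(\init{\cd{x}})$; and condition~\itemref{forest-main} then holds because every term in $\Terms(\startlocs,\reachptrs)$ interprets to $\Ii(\init{\cd{x}}) = \Ii(\stoploc)$, so the second disjunct always applies. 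Second, $P'$ violates memory safety over such a heap iff $P$ is feasible: because $P$ neither allocates nor dereferences and $\alloc$ stays empty throughout $P$, the location $\val_\locT(\rho,\cd{x}) = \Ii(\init{\cd{x}})$ read by the final statement is never allocated, so that dereference is a violation exactly when the preceding run of $P$ is feasible; conversely, any feasible data execution of $P$ extends to a forest heap on which $P'$ is unsafe.

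The only real content beyond restating the earlier reduction — and the step I expect to require the most care — is arguing that the forest conditions impose nothing that could interfere with the simulated data computation. This rests on the remark following~\defref{forest} that forest data-structures place \emph{no} restriction on the data sort: since $P$ lives entirely in the data sort while the forest conditions above only pin down the interpretation of location fields at the single location $\Ii(\init{\cd{x}})$, the feasibility of $P$ over a data model is completely decoupled from the forest requirement. Hence restricting to forest data-structures does not recover decidability, which is precisely what motivates strengthening the hypothesis to \koherent\ programs in the next subsection.
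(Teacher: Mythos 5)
Your proposal is correct and follows exactly the paper's own argument: the paper's proof is a one-liner observing that the reduction from \thmref{memsafe_undec} already uses a specification with $\reach_\reachspec = \emptyset$, so the heap structures in that reduction can be taken to be forest data-structures. Your write-up merely fills in the details the paper leaves implicit (checking conditions \itemref{forest-singleton-stop}--\itemref{forest-main} via self-loops at the stop location, and the decoupling of the data sort from the forest requirements), all of which are sound.
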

\begin{proof}
Follows trivially from~\thmref{memsafe_undec} because the reachability specification $\reachspec$ used in the proof of~\thmref{memsafe_undec} is such that 
$\reach_\reachspec$ is the empty set.
\end{proof}

Next, we introduce a class of programs called \koherent\ programs that,
in conjunction with forest data-structures results in a $\pspc$ decision procedure.
We will discuss the decision procedure in~\secref{automaton}.


\subsection{\Koherent~Executions and Programs}
\seclabel{koherent-definition}

In this section, we identify the class of~\koherent~programs and executions,
for which we show decidability.
It is analogous to the notion of coherence (\defref{coherence}), except that it uses a different notion of equivalence (for terms) instead of $\rcong_{\alpha(\cdot)}$.
We define this new notion of equivalence (\emph{forest equality closure}) below.
Intuitively, this notion characterizes the behavior of executions on the class of forest data-structures
and allows us to adapt the algorithm used for assertion checking of \alaw~\coherent\ programs (\thmref{alaw-coherent-assertion-decidable}).
\begin{definition}[Forest Equality Closure]
Let $\reachspec = \set{\reachspec_i}_{i=1}^n$ be a reachability specification,
with $\reachspec_i = (\startlocs_i, \reachptrs_i, \stoplocs_i)$.
Let $\Terms_i = \Terms(\startlocs_i, \reachptrs_i)$ ($1\leq i\leq n$).
Let $E \subseteq \Terms \times \Terms$ be an equality relation on terms.
The \emph{forest equality closure} of $E$ with respect to $\reachspec$, 
denoted $\eqclosure(\reachspec, E) \subseteq \Terms \times \Terms$
is the smallest congruence relation that satisfies the following.
\begin{itemize}
	\item $E \subseteq \eqclosure(\reachspec, E)$.
	\item For every $t_i \in \Terms_i$ and $t_j \in \Terms_j$ such that
	$t_i \neq t_j$, we have
	\[
	(t_i, t_j) \in \eqclosure(\reachspec, E) \implies \set{(t_i, \stoploc_i), (t_j, \stoploc_j)} \subseteq \eqclosure(\reachspec, E)
	\]
\end{itemize}
\end{definition}

We will now define the notion of \emph{\koherent~ programs} with respect to a given reach specification $\reachspec$. To do this, we first need the notion of \koherent~ executions. This notion is very similar to the notion of \coherence\ (\defref{coherence}) and requires that terms are not recomputed and that assumes over the data sort are early.

\begin{definition}[\Koherence]
\deflabel{koherence}
Let $\sigma$ be a complete or partial execution. $\sigma$ is said to be {\koherent} if it satisfies the following two conditions.

\begin{description}
    \item[\bf Memoizing.] Let $\rho$ be a prefix of $\sigma$ of the form 
    $\rho'\cdot \dblqt{x \passign \pderef{y}{h}} $ or $\rho'\cdot \dblqt{x \passign f(\vec{y})} $ and let $t = \comp(\rho,x)$. 
    If there is a term $t' \in \Terms(\rho')$ such that $t \cong_{\eqclosure(\reachspec,\alpha(\rho'))} t'$, then it must be the case that there is some variable $z \in V$ such that $\comp(\rho',z) \cong_{\eqclosure(\reachspec,\alpha(\rho'))} t$.
    This condition is applicable to every sort-sensible combination of symbols.
    
    \item[\bf Early Assumes.] Let $\rho$ be a prefix of the form $\rho'\cdot \dblqt{\passume(u = v)}$ where $u,v \in V_\dataT$ and $t_u = \comp(\rho',u), t_v = \comp(\rho',v)$. If there is a term $t \in \Terms(\rho')$ such that $t$ is a superterm of $t_u$ or $t_v$ modulo $\cong_{\eqclosure(\reachspec,\alpha(\rho'))}$, then there must be some variable $w \in V_\dataT$ such that $\comp(\rho',w) \cong_{\eqclosure(\reachspec,\alpha(\rho'))} t$.
\end{description}
\end{definition}

Observe that the above definition is close the definition of \coherence\ (\defref{coherence}),
but nevertheless there are important differences.
First, the early assumes requirement does not apply to variables of the $\locT$ sort because forest data-structures simplify that requirement. 
Second, the congruence with respect to which equalities are demanded in the above definition is not one of the congruence closure defined by the equalities accumulated by the execution (which is the congruence used in~\defref{coherence}), but rather the congruence of forest equality closure.

\begin{definition}
A program is said to be~\koherent~ with respect to $\reachspec$ if all its executions are~\koherent~ with respect to $\reachspec$.
\end{definition}

We note here that \koherent~ programs on forest data-structures are essentially \emph{one-pass} algorithms. Intuitively, forest data-structures enforce the \alaw~ restriction by mandating that two locations obtained by two different traversals from the set of initial locations (therefore being represented by two different \emph{terms}) are different (with minor exceptions). Therefore if a \koherent~ execution computes a term twice, i.e., visits a location twice, by definition it has to store a pointer to that location in some variable. Since the number of variables is fixed \emph{a priori}, it is simple to see that a nontrivial, multi-pass algorithm such as linked-list sorting would inherently be non-streaming-coherent (since it is meant to work on lists of arbitrary size).

\section{Streaming Congruence Closure for Forest Data-structures}
\seclabel{automaton}

We will now focus on {\koherent} programs whose initial heap is a forest data-structure. We now present the second main result of our paper.


\begin{theorem}[Decidability of Memory Safety for \Koherent~Programs and Forest Data-Stru\-ctures]
\thmlabel{memory-safety-decidable}
The memory safety verification problem over forest data-structures
for $\koherent$ programs is decidable and is $\pspc$-complete.
\end{theorem}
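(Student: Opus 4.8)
The plan is to reduce the memory-safety problem to assertion checking and then solve the latter with a streaming automaton adapted from the construction underlying \thmref{alaw-coherent-assertion-decidable}, using the forest equality closure $\eqclosure(\reachspec,\cdot)$ in place of the congruence $\cong_{\alpha(\cdot)}$. The crux is that the forest hypothesis (\defref{forest}) turns \koherence\ into effective alias-awareness, so that the terms computed via $\comp$ and $\mapcomp$ faithfully describe values on every feasible forest structure.

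First I would establish a forest analogue of \thmref{terms-sanity}: for a \koherent\ execution $\sigma$ and any forest data-structure $\Mm\in\heaps(\sigma)$, with map updates resolved using $\cong_{\eqclosure(\reachspec,\alpha(\cdot))}$, one has $\Ii(\comp(\sigma,x)) = \val_\locT(\sigma,x)$ and $\Ii(\mapcomp(\sigma,h,x)) = \fldval_\locT(\sigma,h)(\val_\locT(\sigma,x))$, and likewise on the data sort. The decisive structural fact is \itemref{forest-main}: two traversal terms $t_i\in\Terms_i$ and $t_j\in\Terms_j$ denote the same location in a forest only when they are the identical start term or both collapse to stopping locations; hence $t_i\cong_{\eqclosure(\reachspec,\alpha)}t_j$ holds exactly when the locations coincide in every feasible forest structure. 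Consequently, at each update $\pderef{y}{h}\passign u$ the set of variables that alias $y$ is the same in all feasible forest structures and is identified precisely by forest-equality to $\comp(\cdot,y)$, which is exactly \defref{alias-aware} relativized to forests.

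Next I would reduce memory safety to assertion checking by introducing a fresh Boolean data field $\mathsf{alloc}\in\fields_\dataT$ that records membership in $\alloc(\cdot)$: it is set to $\mathsf{true}$ on $\palloc{x}$, guarded and reset on $\pfree{x}$, and before every dereference $\pderef{y}{h}$ one reads $\pderef{y}{\mathsf{alloc}}$ and asserts it equals $\mathsf{true}$ (reads of this auxiliary field carry no memory-safety obligation). The delicate point is the initial value $\init{\mathsf{alloc}}$, the characteristic function of the unbounded set $\reach_\reachspec$; here the forest property is used a second time, since whether a location named by a term $t$ lies in $\reach_i$ is decided syntactically by whether $t\in\Terms_i$ and $t\not\cong\stoploc_i$ (locations read through a non-reachability field, or beyond a stop node, are unallocated, and \itemref{forest-stop-pointers} makes pointers out of stop nodes self-loops). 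Thus the automaton never materializes $\reach_\reachspec$; it reads off the initial allocation bit of a variable from its term type. The instrumented program stays \koherent\ because the added operations recompute no dropped term, and it is memory safe iff its $\passert(\pfalse)$ is unreachable.

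Finally I would realize the decision procedure as an automaton whose state records the aliasing partition of $V_\locT$, for each class a bounded type tag (the component $\Terms_i$ of its term, or a stop / dynamic / non-reachability marker) together with its allocation bit, and the equality--disequality information on $V_\dataT$ maintained by streaming congruence closure; by the forest analogue of \thmref{terms-sanity} this invariant suffices to detect both infeasibility and memory-safety violations on the fly. Each state has description length polynomial in $|V|$, so intersecting with the polynomial-size automaton $\exec(P)$ and testing reachability of a violating state is in $\pspc$ (guessing the run one state at a time), with running time linear in $|P|$ and exponential in $|V|$. For $\pspc$-hardness I would reduce from feasibility of \coherent\ uninterpreted programs, shown $\pspc$-hard in \cite{coherence2019}: such programs use no heap operations, are vacuously forest-compatible (take $\reach_\reachspec=\emptyset$) and \koherent, and appending a guarded dereference of a deliberately unallocated location, as in the proof of \thmref{memsafe_undec}, makes the program memory-unsafe exactly when a feasible execution exists. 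I expect the forest analogue of \thmref{terms-sanity} to be the main obstacle: proving that $\cong_{\eqclosure(\reachspec,\alpha(\cdot))}$ is sound \emph{and} complete for aliasing over all feasible forest structures, including the boundary behaviour at start and stop locations, and that this is exactly the invariant the streaming automaton maintains.
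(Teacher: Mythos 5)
Your proposal is correct and takes essentially the same approach as the paper: a streaming automaton whose polynomially-describable states combine the congruence-closure components ($\rcong$, $d$, $P$) adapted from \cite{coherence2019} with per-equivalence-class allocation status derived from term types relative to the reachability specification (the paper's $Y$, $M$, $N$, $A$ sets), with the regular inclusion check $\exec(P) \subseteq L(\Aa_\memsafe)$ giving the $\pspc$ upper bound and hardness inherited from coherent uninterpreted programs. The only real difference is presentational: the paper constructs $\Aa_\memsafe$ directly rather than via your instrumented $\mathsf{alloc}$-field reduction to assertion checking, which is also where your own argument effectively lands once you concede that the initial allocated set $\reach_\reachspec$ cannot be captured by an arbitrary uninterpreted field and must instead be read off syntactically from term types by the automaton.
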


Given a reachability specification, we present an algorithm for checking memory safety of such programs. This will establish the decidability and the membership of the problem in $\pspc$. The hardness follows from the $\pspc$-hardness result in~\cite{coherence2019}.

Our algorithm is automata theoretic--- we construct a finite state 
automaton $\Aa_\memsafe$ such that its language $L(\Aa_\memsafe)$ includes all {\koherent} executions that are memory safe and excludes all {\koherent} executions 
that are not memory safe.
Finally, in order to determine if a given \koherent\ program $P$ is memory safe, we simply check if $\exec(P) \subseteq L(\Aa_\memsafe)$; since $\exec(P)$ is also a regular language, this reduces to checking intersection of regular languages.

Let us first discuss some intuition for our construction. The state of the automaton has several components which can be categorized into two groups:
\begin{enumerate*} 
\item those that track feasibility of executions, and
\item those that are used to check for violations of memory safety.
\end{enumerate*}
The first category comprises of three components $\rcong, d, P$ which keep track of equalities, disequalities, and functional relationships between variables at a given point in the execution. These components are inspired by the work of~\cite{coherence2019} in the context of \emph{coherent uninterpreted programs}, where this information was used to answer questions of assertion checking/feasibility.
Our extended and refined notion of~\koherent~ programs also has this property, and these three components serve the same roles in our setting. 
Similar to the work in~\cite{coherence2019}, when an execution is \koherent, these three components $\rcong, d, P$ can be accurately maintained at every step in the execution by performing \emph{local congruence closure}.

We now describe the remaining components of the state, namely those that help check for memory safety.
We know from our earlier discussion that forest data-structures are implicitly alias aware. 
This is because such heap structures consist of many disjoint reachability specifications each of which describes a tree. Therefore, nodes obtained by different pointer traversals from the initial locations are different, and each tree in the forest is closed under a given set of pointers. We keep track of locations (for the purpose of memory safety) primarily using three ordered collections which are disjoint: 
\begin{enumerate*}[label=(\alph*)]
\item a collection $Y = \set{Y_i}_{i}$ of `yes' sets,
\item a collection $M = \set{M_i}_i$ of `maybe' sets and 
\item a `no' set $N$. 
\end{enumerate*}
Each $Y_i$ and $M_i$ above correspond to the $i^{th}$ reachability specification $\reachspec_i$.
In the following, we describe each of these.

$N$ is a set of variables that point to locations which, when dereferenced, lead to a memory safety violation. These locations include the stopping locations of any of the reachability specifications, or those locations in the allocated set that have been freed during the execution.
Consider the reachability specification $\psi = (\set{\cd{x}},\set{\cd{next}},\set{\nil})$. At the beginning of any execution, the set $N$ for this specification is $\set{\nil}$. Further, when the execution encounters the statement $\pfree{\cd{y}}$, we update the $N$ component of the state by adding $\cd{y}$.

The sets $Y_i$ hold a subset of program variables that point to locations obtained by traversing the $i^\text{th}$ reachability specification $\reachspec_i$. In particular, a variable $\cd{z}$ belongs to $Y_i$ when we know that the corresponding location can be dereferenced without causing a memory safety violation in any heap structure. This can happen when the execution establishes that the location pointed to by $\cd{z}$ is not the stopping location of $\reachspec_i$ (i.e., $\stoploc_i$). Consider $\psi$ and the single step execution $\sigma_1 = \passume(\cd{x} \neq \nil)$. At the end of $\sigma_1$ we know that in all heap structures in which $\sigma_1$ is feasible, it must be that the location pointed to by $\cd{x}$ can be dereferenced, and therefore we would add $\cd{x}$ to the corresponding `yes' set.

Lastly, the `maybe' sets $M_i$ hold variables that can be
obtained by traversing the tree of $\reachspec_i$, 
but are neither known to be stopping locations/freed
locations, nor known to be in the allocated set in all feasible heap structures. 
Consider again $\psi$ and the execution 
$\sigma_2 = \passume(\cd{x} \neq \nil) \cdot \cd{y} \passign \pderef{\cd{x}}{\cd{next}}$. 
Recall that $\cd{x}$ belongs to the `yes' set after 
the first statement (i.e., the execution $\sigma_1$) 
and we can therefore dereference it. 
However, at the end of the execution we would include $\cd{y}$ in the `maybe' set because the location $\pderef{\cd{x}}{\cd{next}}$ 
is not in the allocated set in all feasible heap structures.
In particular, a heap structure defining a list of length one is a feasible structure for this execution but $\pderef{\cd{x}}{\cd{next}}$ would be $\nil$, which is not in the allocated set.
Now consider the execution $\sigma_3 = \sigma_2 \cdot \passume(\cd{y} \neq \nil)$. After the last statement in $\sigma_3$ we would now shift $\cd{y}$ from the `maybe' set to the `yes' set. 

This is precisely what our automaton does, shifting variables between components and flagging memory safety violations when the execution dereferences (the location pointed to by) a variable that is not in any $Y_i$. However, we do this modulo equalities between variables and these sets described above are in fact sets of equivalence classes. Lastly, we also keep track of variables that point to locations allocated by $\dblqt{\palloc{\cdot}}$ using a set $A$ of equivalence classes of variables. Of course, we would transfer an equivalence class from $A$ to $N$ if the execution frees this memory, similar to what happens to members of the `yes' sets.

\vspace{-0.05in}
\subsection{Formal Description of Construction}

We shall now proceed with the formal construction of the automaton $\Aa_\memsafe$, which accepts a \koherent\ execution iff it is memory safe. 
Recall that our reachability specification is an indexed set of tuples
$\reachspec = \set{\reachspec_k}_{k=1}^n$, where
$\reachspec_k = (\startlocs_k, \reachptrs_k, \stoplocs_k)$.
To simplify presentation, we
assume that the set of variables $V$ in our programs is 
such that for every constant $c$ appearing in the reachability 
specification $\reachspec$, there is a variable $v_c$ 
corresponding to $c$.
Further, we assume that these variables are never overwritten.
We will, therefore, often interchangeably refer to these
constants in $\reachspec$ by their corresponding variables, and vice versa.
These assumptions can be relaxed with a
more involved construction.

The automaton is a tuple $\Aa_\memsafe = (Q, q_0, \delta)$, where $Q$ is the set of states, $q_0$ is the initial state and $\delta : Q \times \Pi \to Q$ is the transition function. 
Recall that executions are strings over $\Pi$, which is also the alphabet of the automaton $\Aa_\memsafe$. 
We describe each of these components below.

\textbf{States.}
The automaton has two distinguished states $\infeasible$
and $\unsafe$. All other states are tuples of the form
$(\rcong, d, P, Y, M, N, A, \xtra)$, where each component is as follows.
\begin{itemize}
	\item $\rcong$ is an equivalence relation over $V$
	that respects sorts. We will use $\eqcl{x}{\rcong}$ to denote the set $\setpred{y}{(x, y) \in \rcong}$
	\item $d$ is a symmetric set of pairs of the form $(c_1, c_2)$,
	where $c_1, c_2 \in V / \rcong$ are equivalence classes.
	\item $P$ associates partial mappings to function symbols in $\Ff_\dataT$
	and pointer field in $\fields = \fields_\locT \cup \fields_\dataT$. 
	More formally, for every $f \in \Ff_\dataT$ of arity $r$, 
	and classes $c_1, c_2, \ldots, c_r \in V_\dataT /\rcong$,
	we have
	$P(f)(c_1, \ldots, c_r) \in V_\dataT/\rcong $ (if defined).
	Similarly, for every $p \in \fields_\locT$, and every 
	$c \in V_\locT /\rcong$,
	$P(p)(c) \in V_\locT  /\rcong $, and
	for every $d \in \fields_\dataT$, and every 
	$c \in V_\locT /\rcong$,
	$P(d)(c) \in V_\dataT /\rcong$. 
	We will say $P(h)(c_1, \ldots, c_k) = \undf$ when the
	function/pointer symbol $h$ is not defined on the arguments $c_1, \ldots, c_k$.
	\item $Y = \set{Y_k}_{k=1}^n$ and $M = \set{M_k}_{k=1}^n$ are 
	indexed collections of sets
	such that for every $1\leq i\leq n$, we have
	$Y_i, M_i \subseteq V_\locT/\rcong$.
	\item The sets $N$, $A$ and $\xtra$ are sets of equivalence classes of location
	variables, i.e., $N, A, \xtra \subseteq V_\locT/\rcong$.
\end{itemize}

\textbf{Initial State.}
The initial state $q_0$ is the tuple $(\rcong_0, d_0, P_0, Y_0, M_0, N_0, A_0, \xtra_0)$
such that 
\begin{itemize}
\item $\rcong_0$ is the identity relation on the set $V$ of variables,
\item $P_0$ is such that for all functions and pointer fields $f$, 
the range of $P(f)$ is empty,
\item each of $d_0, A_0$ and $(Y_0)_1, \ldots, (Y_0)_n$ are $\emptyset$ (empty set),
\item for each $1\leq i\leq n$, $(M_0)_i = \setpred{\eqcl{v}{\rcong_0}}{v \in \bigcup_{i=1}^n \startlocs_i}$,
\item $N_0 = \setpred{\eqcl{v}{\rcong_0}}{v \in \bigcup_{i=1}^n \stoplocs_i}$.
\item $\xtra_0 = \setpred{\eqcl{v}{\rcong_0}}{v \in V_\locT} \setminus \left(N_0 \cup\bigcup_{i=1}^n (M_0)_i\right) $.
\end{itemize}

\textbf{Transitions}.
The states $\infeasible$ and $\unsafe$ are absorbing states.
That is, for every $s \in \Pi$, $\delta(\infeasible, s) = \infeasible$
and $\delta(\unsafe, s) = \unsafe$.
In the following, we describe the transition function for
every other state in $Q$.
Let $q \in Q \setminus \set{\infeasible, \unsafe}$, $s \in \Pi$
and let $q' = \delta(q, s)$.
Then, $q'$ is $\infeasible$ or $\unsafe$, or is of the form
 $q' = (\rcong', d', P', Y', M', N', A', \xtra')$.
\ucomment{In many of the transitions below, 
$\rcong'$ (resp. $d'$) will be obtained by
adding or removing pairs from $\rcong$ (resp. $d$).
For ease of notation, we will not explicitly ensure that $\rcong'$ (resp. $d'$) be an equivalence relation (resp. symmetric relation respecting $\rcong'$), but the reader should nevertheless assume $\rcong'$ (resp. $d'$) indeed satisfies this property.}

\begin{enumerate}
	\item \textbf{Case $s = \dblqt{u \passign v}$, $u, v \in V$}. \\
	In this case, we add the variable $u$ into the class of $v$ and
	appropriately update each of the components. That is,
	$\rcong' = \big(\rcong \setminus \setpred{(u, u'),(u', u)}{u\neq u', u' \in \eqcl{u}{\rcong}}\big) \cup \setpred{(u, v'), (v', u)}{v' \in \eqcl{v}{\rcong}}$.
	The other components of the state are the same as in $q$. 

	\item \textbf{Case $s = \dblqt{x \passign \pderef{y}{p}}$, $x, y \in V_\locT$ and $p \in \fields_\locT$}.\\
	\newcommand{\spindex}{k}
	In this case, we need to check if the variable $y$ corresponds to a location
	that can be dereferenced. If not, we have a memory safety violation;
	otherwise, we establish the relationship $x = p(y)$ in the next state.
	Formally, if there is no $i$ such that $\eqcl{y}{\rcong} \in Y_i$ and if $\eqcl{y}{\rcong} \not\in A$, then $q' = \unsafe$.
	Otherwise we have that $\eqcl{y}{\rcong} \in A$ or  there is a $\spindex$ such that $\eqcl{y}{\rcong} \in Y_\spindex$. 
	In this case we define the tuple $q' = (\rcong', d', P', Y', M', N', A', \xtra')$
	below.
	Here, we need to consider the following cases. 
	\begin{itemize}
		\item Case $P(p)(\eqcl{y}{\rcong})$ is defined and equals $\eqcl{z}{\rcong}$.
		In this case, $q'$ is defined in the same manner as though $s = \dblqt{x \passign z}$.

		\item Case $P(p)(\eqcl{y}{\rcong}) = \undf$. 
		Here, for $\koherent$ programs it must be the case that $\eqcl{y}{\rcong} \in Y_\spindex$ for some $\spindex$.
		Here, we create a new singleton equivalence class containing $x$
		and set the value of the $p$ map, on $y$ to be this new class.
		We also assert that $\eqcl{x}{\rcong'}$ is not equal to any class in any of the $Y_i$s or in $A$.
		That is, 
		\begin{itemize}
			\item $\rcong' = \rcong \setminus \setpred{(x, x')}{x' \neq x} \cup \set{(x, x)}$.		
			\item $d'{=} \setpredsmall{(\eqcl{u}{\rcong'}, \eqcl{v}{\rcong'})}{u \neq x, v \neq x,  (\eqcl{u}{\rcong}, \eqcl{v}{\rcong}) \in d} \cup \setpredsmall{(\eqcl{x}{\rcong'}, c), (c, \eqcl{x}{\rcong'})}{c \in A \cup \bigcup\limits_{i=1}^n Y_i}$
			\item $P'(p)(\eqcl{y}{\rcong'}) = \eqcl{x}{\rcong'}$. For all other combinations of functions/pointers and arguments, $P'$ behaves same as $P$.
			\item 
			One of the sets $M_\spindex$ and $\xtra$ are updated depending upon the pointer $p$. 
			If $p \in \reachptrs_k$, then
			$M'_\spindex = \setpredsmall{\eqcl{z}{\rcong'}}{ \eqcl{z}{\rcong} \in M_\spindex} \cup \set{\eqcl{x}{\rcong'}}$.
			Otherwise, $\xtra = \setpredsmall{\eqcl{z}{\rcong'}}{ \eqcl{z}{\rcong} \in \xtra} \cup \set{\eqcl{x}{\rcong'}}$.
			\item All other components are the same as in $q$.
		\end{itemize}
	\end{itemize}

	\item \textbf{Case $s = \dblqt{a \passign \pderef{y}{d}}$, $a \in V_\dataT, y \in V_\locT$ and $d \in \fields_\dataT$}.\\
	As in the previous case, $q' = \unsafe$ if $\eqcl{y}{\rcong} \not\in \cup_{i=1}^nY_i \cup A$.
	Otherwise, similar to the previous case, we have two cases to consider.
	As before, if there is a variable $b \in V_\dataT$ such that $P(d)(\eqcl{y}{\rcong}) = \eqcl{b}{\rcong}$, then we treat this case as that of $\dblqt{a \passign b}$.
	Otherwise, the new equivalence relation $\rcong'$ is such that
	$\rcong' = \rcong \setminus \setpred{(a, u)}{u \neq a} \cup \set{(a, a)}$,
	while the other components are the same as in $q.$

	\item \textbf{Case $s = \dblqt{\pderef{y}{h} \passign u}$, $y \in V_\locT$}.\\
	We uniformly handle the case of $h$ being either a pointer field ($\fields_\locT$) or a data field ($\fields_\dataT$).
	Here we have $q' = \unsafe$ if $\eqcl{y}{\rcong} \not\in \cup_{i=1}^nY_i \cup A$.
	Otherwise, we simply change $P$ as follows (while keeping other components the same as in $q$):
	\begin{itemize}
	    \item $P'(f) = P(f)$ for $f\neq h$
	    \item $P'(h)(\eqcl{y}{\rcong'}) = \eqcl{u}{\rcong'}$
	    \item For $z \in V_\locT$ such that $z \notin \eqcl{y}{\rcong}$, $P'(h)(\eqcl{z}{\rcong'}) = \eqcl{w}{\rcong'}$ if $P(h)(\eqcl{z}{\rcong}) = \eqcl{w}{\rcong}$  for some variable $w$, and 
	    $P'(h)(\eqcl{z}{\rcong'}) = \undf$ if no such variable $w$ exists.
	\end{itemize}

	\item \textbf{Case $s = \dblqt{a \passign f(c_1, \ldots, c_r)}$, $a \in V_\dataT$}.\\
	Here, we have two cases to consider again.
	If there is a variable $b \in V_\dataT$ such that $P(f)(\eqcl{c_1}{\rcong}, \ldots, \eqcl{c_r}{\rcong}) = \eqcl{b}{\rcong}$, then we treat this case as that of $\dblqt{a \passign b}$.
	Otherwise, we add a singleton equivalence class containing $a$ and update $P(f)$, while keeping all other components the same (modulo the new equivalence relation).
	Formally,
	\begin{itemize}
			\item $\rcong' = \rcong \setminus \setpred{(a, u), (u,a)}{u \in V_\dataT} \cup \set{(a, a)}$.		
			\item $P'(h)$ is same as in $q$ if $h\neq f$. The evaluation of $P'$ on $f$ is described as follows. 
			\begin{align*}
				P'(f)(\eqcl{u_1}{\rcong'}, \ldots, \eqcl{u_r}{\rcong'}) = 
				\begin{cases}
				\eqcl{a}{\rcong'} & \text{if for every $1\leq i\leq r$}, \eqcl{u_i}{\rcong} = \eqcl{c_i}{\rcong} \text{ and } a \neq u_i \\
				\eqcl{u}{\rcong'} & 
				\begin{aligned}
				&\text{otherwise if } a \not\in \set{u, u_1, \ldots, u_r} \\
				&\text{and } \eqcl{u}{\rcong} = P(f)(\eqcl{u_1}{\rcong}, \ldots, \eqcl{u_r}{\rcong})
				\end{aligned}\\
				\undf & \text{otherwise}
				\end{cases}
			\end{align*}
			\item All other components are the same as in $q$.
		\end{itemize}

	\item \textbf{Case $s = \dblqt{\palloc{x}}$}.\\
	In this case, we create a new singleton class containing $x$, 
	and add this class to the component $A$.
	We also assert that this new class is not equal to any other class.
	Formally, 
	\begin{itemize}
		\item $\rcong' = \rcong \setminus \setpred{(x, u)}{u \in V_\locT} \cup \set{(x, x)}$.
		\item $d' = \setpred{(\eqcl{u_1}{\rcong'},\eqcl{u_2}{\rcong'})}{u_1 \neq u_2, (\eqcl{u_1}{\rcong},\eqcl{u_2}{\rcong}) \in d \text{ or } u_1 = x \lor u_2 = x}$
		\item $A' = \setpred{\eqcl{u}{\rcong'} }{\eqcl{u}{\rcong} \in A} \cup \set{\eqcl{x}{\rcong'}}$.
		\item The component $P$ is updated as follows.
		For every function symbol $f \in \Ff_\dataT$, $P'(f)$ is the same as $P(f)$. Further, for every pointer field $h \in \fields_\dataT \cup \fields_\locT$ and for every variable $y \in V_\locT \setminus \set{x}$, we have $P'(h)(\eqcl{y}{\rcong'}) = P(h)(\eqcl{y}{\rcong})$.
		Finally, the mapping on $x$ is defined as $P(p)(\eqcl{x}{\rcong'}) = \eqcl{x}{\rcong'}$ for every location pointer $p \in \fields_\locT$ and $P(d)(\eqcl{x}{\rcong'}) = \undf$ for every data pointer $d \in \fields_\dataT$.
		\item All other components are updated as usual.
	\end{itemize}

	\item \textbf{Case $s = \dblqt{\pfree{x}}$}.\\
	In this case, if 
	$\eqcl{x}{\rcong} \not\in A \cup \bigcup\limits_{i=1}^n Y_i$,
	then $q' = \unsafe$.
	Otherwise, we remove the class $\eqcl{x}{\rcong}$ from the sets $A$ or $Y_1, \ldots, Y_n$ and add it to the set $N$.
	That is,
	\begin{itemize}
	    \item $\rcong' = \rcong$ 
	    \item $N' = \setpred{\eqcl{z}{\rcong'}}{\eqcl{z}{\rcong} \in N} \cup \set{\eqcl{x}{\rcong'}}$
	    \item for every $1\leq i\leq n$, $Y'_i = \setpred{\eqcl{z}{\rcong'}}{\eqcl{z}{\rcong} \neq \eqcl{x}{\rcong}, \eqcl{z}{\rcong} \in Y_i}$
	    \item $A = \setpred{\eqcl{z}{\rcong'}}{\eqcl{z}{\rcong} \neq \eqcl{x}{\rcong}, \eqcl{z}{\rcong} \in A}$
	    \item Other components remain the same.
	\end{itemize}

	\item \textbf{Case $s = \dblqt{\passume(x = y)}$, $x, y \in V_\locT$}.
	In this case, if $\eqcl{x}{\rcong} = \eqcl{y}{\rcong}$ then $q'= q$. Otherwise we have several cases to consider.
	
	In each of these cases, we construct a new tuple $q'' = (\rcong'', d'', P'', Y'', N'', M'', A'', \xtra'')$. 
	Finally, we set $q' = q''$ if $d'' \cap \rcong'' = \emptyset$; otherwise we have $q' = \infeasible$.
	    \begin{itemize}
	        \item The first case to consider is when 
	        $\set{\eqcl{x}{\rcong}, \eqcl{y}{\rcong}} \cap \bigcup\limits_{i=1}^n M_i= \emptyset$.
	        In this case, we merge $\eqcl{x}{\rcong}$ and $\eqcl{y}{\rcong}$. 
	        More formally, $\rcong''$ is the smallest equivalence relation such that $\big(\rcong \cup \set{(x,y)}\big) \subseteq \rcong''$. 
	        Further, for every component $C \in \set{\xtra, A, N} \cup \bigcup\limits_{i=1}^n \set{Y_i, M_i}$ with the corresponding component in $q''$ being $C''$, and for every $z \in V_\locT$ 
	        such that $z \notin \eqcl{x}{\rcong} \cup \eqcl{y}{\rcong}$, we have
	       $\eqcl{z}{\rcong''} \in C''$ iff $\eqcl{z}{\rcong} \in C$. 
	       For a variable $z \in \eqcl{x}{\rcong} \cup \eqcl{y}{\rcong}$, we have $\eqcl{z}{\rcong''} \in C''$ iff $\set{\eqcl{x}{\rcong}, \eqcl{y}{\rcong}} \cap C \neq \emptyset$.
	        The other components of $q''$ are the same as in $q$ (modulo the new equivalence classes).
	        
	        \item Otherwise, consider the case when $\eqcl{x}{\rcong} \in M_i$ for some $i$. In this case, in addition to adding $\set{(x,y)}$ we also add the pair $\set{(x,\stoploc_i)}$. Similarly if $\eqcl{y}{\rcong} \in M_j$ for some $j$ we add $\set{(y,\stoploc_j)}$. Construct the state $q''$ with $\rcong''$ being the smallest equivalence relation including these new pairs (and other components remaining the same).  
	    \end{itemize}
	\item \textbf{Case $s = \dblqt{\passume(x \neq y)}$, $x, y \in V_\locT$}.
	Similarly as above, in this case when $(\eqcl{x}{\rcong},\eqcl{y}{\rcong}) \in d$ we have $q' = q$. If $\eqcl{x}{\rcong} = \eqcl{y}{\rcong}$ then $q'  = \infeasible$. Otherwise, we have the following cases:
	\begin{itemize}
	    \item $\eqcl{x}{\rcong} = \eqcl{\stoploc_i}{\rcong}$ and $\eqcl{y}{\rcong} \in M_i$ for some $i$ (or vice versa). In this case, we simply put the equivalence class of $y$ into $Y_i$ and assert that $\eqcl{y}{\rcong}$ is unequal to all other classes.
	    More formally, $\rcong' = \rcong$, $Y'_i = Y_i \cup \set{\eqcl{y}{\rcong}}$ $d' = d \cup  \setpred{(\eqcl{y}{\rcong},\eqcl{z}{\rcong}), (\eqcl{z}{\rcong},\eqcl{y}{\rcong})}{z \notin \eqcl{y}{\rcong}}$. The other components remain the same.
	    \item Otherwise, we simply update $d' = d \cup \set{(\eqcl{x}{\rcong},\eqcl{y}{\rcong})}$; all other components are the same. 
	\end{itemize}
	\item \textbf{Case $s = \dblqt{\passume(a = b)}$, $a, b \in V_\dataT$}.
	In this case, we merge equivalence classes repeatedly and perform a `local congruence closure'. We construct a state $q''$ to determine if the transition must be to $\infeasible$. More formally, we define the state $q''$ with the $\rcong''$ component as the smallest equivalence relation such that:
	\begin{enumerate*}
	    \item $\rcong \cup {(a,b)} \subseteq \rcong''$
	    \item If $(u_i,v_i) \in \rcong$ for $1 \leq i \leq r$ and $\eqcl{w}{\rcong''} = f(\eqcl{u_1}{\rcong''}\ldots,\eqcl{u_r}{\rcong''}),\eqcl{w'}{\rcong''} = f(\eqcl{v_1}{\rcong''}\ldots,\eqcl{v_r}{\rcong''}) $ then $(w,w') \in \rcong''$.
	\end{enumerate*}
	
	The other components remain the same. In particular, it is correct to retain the $P$ component since the above construction is a congruence relation.
	Finally, if there exist $u,v \in V_\dataT$ such that $(u,v) \in \rcong''$ and $(\eqcl{u}{\rcong''},\eqcl{v}{\rcong''}) \in d''$ then $q' = \infeasible$. Otherwise, $q' = q''$.
	\item \textbf{Case $s = \dblqt{\passume(a \neq b)}$, $a, b \in V_\dataT$}.
	Similarly as above, if $\eqcl{a}{\rcong} = \eqcl{b}{\rcong}$ then $q' = \infeasible$. Otherwise, we update $d' = d \cup \set{(\eqcl{a}{\rcong},\eqcl{b}{\rcong})}$ and all other components remain the same.
\end{enumerate}

The following theorem states the correctness of the automaton $\Aa_\memsafe$.

\begin{lemma}
\lemlabel{automaton-correctness}
Let $\sigma$ be a $\koherent$ execution and 
let $\reachspec$ be a reachability specification
and let $q$ be the state of the automaton $\Aa_\memsafe$
after reading $\sigma$.
Then, $q = \unsafe$ iff
there is a forest data-structure $\Mm$ (with respect to $\reachspec$) such that 
$\sigma$ violates memory safety on $\Mm$.
\end{lemma}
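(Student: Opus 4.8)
The plan is to prove, by induction on the length of the $\koherent$ execution $\sigma$, a strengthened invariant $\Phi(\sigma)$ that characterizes every component of the reached state $q = (\rcong, d, P, Y, M, N, A, \xtra)$ in terms of the behaviour of $\sigma$ over \emph{all} forest data-structures on which prefixes of $\sigma$ are feasible, and then read off both directions of the lemma as corollaries. Assuming $q \notin \set{\infeasible, \unsafe}$, the invariant asserts: (i) the feasibility components $\rcong, d, P$ faithfully summarize the equalities, disequalities, and pointer/function relationships forced among the terms $\comp(\sigma, \cdot)$ modulo $\cong_{\eqclosure(\reachspec, \alpha(\sigma))}$ (so $\eqcl{x}{\rcong} = \eqcl{y}{\rcong}$ iff $\comp(\sigma,x) \cong_{\eqclosure(\reachspec,\alpha(\sigma))} \comp(\sigma,y)$, and symmetrically for $d$ and $P$), and in particular a normal state implies $\sigma$ is feasible on \emph{some} forest data-structure; and (ii) the safety components classify every location class precisely: $\eqcl{y}{\rcong} \in \bigcup_i Y_i \cup A$ iff $\val_\locT(\sigma,y) \in \alloc(\sigma)$ in \emph{every} feasible forest data-structure, while $\eqcl{y}{\rcong} \in M_i$ iff the location of $y$ is reachable by a $\reachspec_i$-traversal yet coincides with $\stoploc_i$ in \emph{some} feasible forest data-structure (and $N, \xtra$ capture the stopped/freed and the unreachable-unallocated classes).

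I would then check that each transition preserves $\Phi$. The base case is immediate from the initial state: start constants sit in $M$, stop constants in $N$, all remaining location classes in $\xtra$, and $A = \emptyset$. The routine inductive cases are the data assignments, $a \passign f(\vec c)$, and the data assumes, where $\rcong, d, P$ evolve exactly by the local congruence closure of \cite{coherence2019} (its correctness transported from $\cong_{\alpha(\cdot)}$ to forest equality closure using \thmref{terms-sanity} and the fact, guaranteed by \defref{forest}, that forest data-structures make $\sigma$ alias-aware). The content-bearing cases are the location assumes and the heap operations: $\passume(x \neq y)$ with $\eqcl{x}{\rcong} = \eqcl{\stoploc_i}{\rcong}$ and $\eqcl{y}{\rcong} \in M_i$ must promote $y$ from the maybe set $M_i$ to the yes set $Y_i$, because in a forest, learning $y \neq \stoploc_i$ forces the location of $y$ to be allocated; $x \passign \pderef{y}{p}$ must place the fresh class into $M_k$ precisely when $p \in \reachptrs_k$ and into $\xtra$ otherwise; and $\palloc{x}$ (resp.\ $\pfree{x}$) must add its class to $A$ (resp.\ move the relevant class into $N$). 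Each such case is discharged by re-examining the semantic meaning in clauses (i)--(ii) before and after the step.

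With $\Phi$ established the lemma is immediate. If some forest data-structure $\Mm$ witnesses a violation at a prefix $\rho \cdot s$ of $\sigma$, with $\rho$ feasible on $\Mm$, then the dereferenced or freed variable $y$ satisfies $\val_\locT(\rho,y) \notin \alloc(\rho)$; clause (ii) gives $\eqcl{y}{\rcong} \notin \bigcup_i Y_i \cup A$ in the state after $\rho$, so the transition for $s$ enters $\unsafe$, and since $\unsafe$ is absorbing we get $q = \unsafe$. Conversely, if $q = \unsafe$, let $\rho \cdot s$ be the prefix that first enters $\unsafe$; the state after $\rho$ is normal, so by clause (i) it is feasible on some forest data-structure, and the rule for $s$ could only have fired $\unsafe$ because $\eqcl{y}{\rcong} \notin \bigcup_i Y_i \cup A$. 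Clause (ii) then produces a feasible forest data-structure $\Mm$ with $\val_\locT(\rho,y) \notin \alloc(\rho)$, i.e.\ a structure on which $\sigma$ violates memory safety.

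The main obstacle is clause (ii), and within it the half that must \emph{manufacture} a witness: whenever $\eqcl{y}{\rcong}$ is left out of $\bigcup_i Y_i \cup A$ (in particular when $\eqcl{y}{\rcong} \in M_i$), I must exhibit a concrete forest data-structure, feasible for $\rho$, in which the location of $y$ is genuinely unallocated. The natural candidate is the term model of \cite{coherence2019} quotiented by $\eqclosure(\reachspec, \alpha(\rho))$, with $\comp(\rho,y)$ additionally identified with $\stoploc_i$; the work is to verify that this model (a) remains feasible for $\rho$ --- which, as in \thmref{terms-sanity}, relies on this quotient being a feasible model together with alias-awareness over forests --- (b) actually satisfies the axioms of \defref{forest}, and (c) places $\val_\locT(\rho,y)$ outside $\alloc(\rho)$. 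Showing that forest equality closure forbids exactly the right identifications --- so that $M_i$-membership genuinely leaves room to set $y = \stoploc_i$ without contradicting $d$ or $\rcong$ --- is where the forest-data-structure and $\koherence$ hypotheses do all the work, and is the crux of the argument.
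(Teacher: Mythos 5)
You should know at the outset that the paper states \lemref{automaton-correctness} bare---no proof of it appears anywhere in the text---so there is no paper proof to compare your proposal against, and I am evaluating it on its merits. Your architecture (a strengthened inductive invariant giving an exact semantic reading of every state component, both directions of the lemma as corollaries, and the term model modulo $\eqclosure(\reachspec,\alpha(\rho))$ manufacturing the violating structure) is the natural and almost certainly the intended approach. But there is a genuine gap, and it sits precisely in a case you discharge in one clause: the location-sort assumes. Your clause (ii) asserts that $\eqcl{y}{\rcong} \in \bigcup_i Y_i \cup A$ iff $y$'s location is allocated in \emph{every} feasible forest data-structure, and your inductive step verifies only the \emph{direct} promotion rule, where the automaton reads $\passume(y \neq x)$ with $\eqcl{x}{\rcong}$ the stop class and $\eqcl{y}{\rcong} \in M_i$. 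Disequality with the stop location can, however, be established \emph{indirectly}, and the automaton has no rule for that. Take the single specification $\reachspec = (\set{\init{\cd{x}}}, \set{\cd{next}}, \set{\nil})$ and the execution
\begin{align*}
\passume(\cd{x} \neq \nil)
\cdot
\cd{y} \passign \pderef{\cd{x}}{\cd{next}}
\cdot
\passume(\cd{y} \neq \cd{w})
\cdot
\passume(\cd{w} = \nil)
\cdot
\cd{z} \passign \pderef{\cd{y}}{\cd{next}}.
\end{align*}
This execution is \koherent\ (both computed terms are fresh modulo the forest closure, and early-assumes constrains only data variables), and it contains no map updates, so it is even \alaw. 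The automaton promotes $\eqcl{\cd{x}}{\rcong}$ into $Y_1$, places $\eqcl{\cd{y}}{\rcong}$ into $M_1$, records $(\eqcl{\cd{y}}{\rcong},\eqcl{\cd{w}}{\rcong})$ in $d$ (neither class is the stop class, so no promotion fires), then merges $\eqcl{\cd{w}}{\rcong}$ with $\eqcl{\nil}{\rcong}$ via the plain-merge case of the equality rule (neither class lies in a maybe set), leaving $\eqcl{\cd{y}}{\rcong}$ in $M_1$; the final dereference then drives it to $\unsafe$. Yet on every forest data-structure on which the prefix is feasible we have $\cd{y} = \pderef{\cd{x}}{\cd{next}} \neq \cd{w} = \nil$, hence the location of $\cd{y}$ lies in $\reach_\reachspec$, which here equals the allocated set, and the dereference is safe: no forest data-structure witnesses a violation.

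So the $\passume(\cd{w} = \nil)$ transition destroys your invariant and the induction cannot be completed; worse, this is not something a cleverer invariant can route around, because the counterexample shows the lemma is false for the transition relation as literally defined. Any complete proof must first repair the construction---for instance, after every merge, re-examine $d$ and promote each maybe-class newly known to be disequal to its stop class; and, dually, force an $M_i$-class to equal $\stoploc_i$ only when it is merged with a genuinely distinct traversal class, since the current rule (which fires on a merge with \emph{any} class) can send feasible executions to $\infeasible$ and thereby mask later violations, breaking the other direction of the lemma too---or else strengthen \koherence\ with an early-assumes requirement on location variables, contrary to the paper's remark that forests make this unnecessary. Two smaller notes on your witness construction: the term model quotiented by $\eqclosure(\reachspec,\alpha(\rho))$ does not satisfy condition \itemref{forest-stop-pointers} of \defref{forest} (pointer fields must be self-loops on stop locations), so you must additionally quotient by those equations; and your characterization of $M_i$ ("equals $\stoploc_i$ in some feasible structure") is exactly the right target, but, as shown above, it is simply not preserved by the transitions as given.
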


The problem of checking if a \koherent\ program is memory safe against
a given specification is decided as follows.
Recall that the set of executions of a given program $P$ 
constitutes a regular language $\exec(P)$. 
Let $L(\Aa_\memsafe)$ be the set of executions $\sigma \in \Pi^*$
that do not go to the state $\unsafe$.
Then, the problem of checking if $P$ is memory safe
reduces to checking if $\exec(P) \subseteq L(\Aa_\memsafe)$.

Next, we show that the problem of checking \koherence~
is also decidable.
To address the problem of checking \koherence,
we construct an automaton $\Aa_\checkkoherence$ similar to $\Aa_\memsafe$
(similar to the automaton for checking \coherence~ in~\cite{coherence2019}) that keeps track
of the following information.
For every function/pointer $f$ of arity $r$, and for every tuple
$(x_1, \ldots, x_r)$ of variables (of appropriate sorts),
each state of the automaton $\Aa_\checkkoherence$
maintains a boolean predicate denoting whether or not
$f(x_1, \ldots, x_r)$ has been computed in any execution
that reaches the state. 
This gives us our next result.

\begin{theorem}
\thmlabel{caching-decidable}
The problem of checking whether a given program is \koherent\ with respect to a given reach specification defining a forest data-structure is decidable in $\pspc$.
\end{theorem}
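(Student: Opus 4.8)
The plan is to decide $\koherence$ by the same automata-theoretic route used for memory safety: construct a finite-state automaton $\Aa_\checkkoherence$ that reads an execution $\sigma \in \Pi^*$ letter by letter and moves to a rejecting state exactly at the first prefix witnessing a violation of either the memoizing or the early-assumes condition of \defref{koherence}, and then check that no execution of the program drives it to rejection. The feasibility-tracking part of the state is inherited from $\Aa_\memsafe$ and from the coherence automaton of \cite{coherence2019}: the components $\rcong, d, P$ maintain, via local congruence closure, the equalities, disequalities, and functional relationships among the current variables, and the forest-specific bookkeeping (the handling of stopping locations in the $\dblqt{\passume(x=y)}$ and $\dblqt{\passume(x \neq y)}$ transitions) realizes the forest equality closure $\eqclosure(\reachspec, \alpha(\cdot))$ rather than plain congruence closure. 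To this I add one new component: a predicate $\Cc$ that records, for every symbol $f \in \Ff_\dataT \cup \fields$ and every tuple $\vec{c}$ of current equivalence classes of the appropriate sorts, whether the term $f(\vec{c})$ --- i.e.\ $f$ applied to the terms currently held by representatives of $\vec{c}$ --- has already been computed along the prefix read so far.

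The detection of violations proceeds step by step. At a compute step $\dblqt{x \passign \pderef{y}{h}}$ or $\dblqt{x \passign f(\vec{y})}$, the automaton inspects whether the functional value is currently alive, i.e.\ whether $P(h)(\eqcl{y}{\rcong})$ (resp.\ $P(f)(\eqcl{\vec{y}}{\rcong})$) is defined. If it is defined, the freshly computed term coincides with one held by an existing variable and there is no violation, so the step is handled as a plain copy. If it is undefined but $\Cc$ records that this very term was computed earlier, then the term was computed, dropped, and is now being recomputed --- exactly the memoizing violation --- and the automaton moves to $\reject$. Otherwise the computation is genuinely new: a fresh singleton class is created for $x$, $P$ is extended, and $\Cc$ marks the term as computed. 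The key fact making $\Cc$ finite-state is that every compute step applies a symbol to the terms \emph{currently} held by variables, so every term ever computed is, at the instant of its computation, indexed by then-current classes; hence indexing $\Cc$ by current classes and updating it as classes merge at $\passume$ steps suffices. The early-assumes condition is checked analogously at a data assume $\dblqt{\passume(u = v)}$: before merging $\eqcl{u}{\rcong}$ with $\eqcl{v}{\rcong}$ and running congruence closure, the automaton verifies that no superterm of $t_u$ or $t_v$ recorded in $\Cc$ as computed fails to be alive in $P$, rejecting if such a dropped superterm exists, exactly as in \cite{coherence2019}.

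Granting the correctness of this construction, the $\pspc$ bound follows by a standard reachability argument. Each state of $\Aa_\checkkoherence$ --- the partition $\rcong$, the relation $d$, the partial maps $P$, the forest sets $Y, M, N, A, \xtra$, and the predicate $\Cc$ --- is representable in space polynomial in the number of variables $|V|$, the number $n$ of reachability triples, and the (fixed, bounded-arity) signature; hence the automaton has at most exponentially many states, each of polynomial size, with a transition function computable in polynomial time. Since $\koherence$ is a prefix-closed (safety) property, a program $P$ fails to be $\koherent$ iff some partial execution in $\pexec(P)$ drives $\Aa_\checkkoherence$ to $\reject$. As $\pexec(P)$ is accepted by an automaton $\Aa_P$ of size linear in $P$, this is precisely reachability of a $\reject$-labelled state in the product $\Aa_P \times \Aa_\checkkoherence$, whose states remain of polynomial size. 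Guessing such a path one product-state at a time, and constructing the exponential automaton on the fly rather than materializing it, is an $\mathsf{NPSPACE}$ procedure; by Savitch's theorem $\mathsf{NPSPACE} = \pspc$, and since $\pspc$ is closed under complement, deciding $\koherence$ itself (the absence of any such path) is in $\pspc$.

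The step I expect to be the main obstacle is the correctness of $\Aa_\checkkoherence$ --- proving an invariant, in the spirit of \lemref{automaton-correctness} and the correctness lemma of \cite{coherence2019}, that ties the automaton state after reading $\rho$ to the symbolic data $\comp(\rho, \cdot)$, $\mapcomp(\rho, \cdot)$, and $\alpha(\rho)$: that $\rcong, d, P$ faithfully track the forest equality closure $\eqclosure(\reachspec, \alpha(\rho))$, and that $\Cc$ records membership of $f(\vec{c})$ in $\Terms(\rho)$ modulo this closure, so that the ``computed-but-not-alive'' test fires exactly when memoizing or early-assumes is first violated. The delicate points are that the relevant equivalence is the forest closure rather than ordinary congruence --- so one must argue that the forest-specific transitions on $\locT$-sort assumes do not create computed superterms that escape the bounded bookkeeping --- and that updating $\Cc$ across class merges and across the stop-location rule preserves the invariant; both reduce, as in the memory-safety construction, to the fact that for $\koherent$ executions local congruence closure over the current variables captures all implied equalities.
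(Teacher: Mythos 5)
Your proposal is correct and takes essentially the same route as the paper: the paper's own (very terse) argument likewise builds an automaton $\Aa_\checkkoherence$ similar to $\Aa_\memsafe$, augmented with boolean predicates recording, for each function/pointer symbol $f$ and tuple of variables (equivalently, current equivalence classes, as in the paper's implementation), whether $f$ applied to them has been computed along the execution, rejecting on a memoizing or early-assumes violation. Your added detail --- the explicit rejection conditions, the on-the-fly product with $\exec(P)$, and the $\mathsf{NPSPACE}$-via-Savitch bound --- merely fills in steps the paper leaves implicit.
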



\section{Implementation and Evaluation}
\seclabel{experiments}

We implemented a tool~\cite{streamveriftool} for deciding memory safety of forest data-structures based on the construction from~\secref{automaton}. The tool is \textasciitilde 2000 lines of Ocaml 4.07.0 code. It takes as input a program from the grammar presented in~\secref{program-syntax} annotated with a reachability specification, as in~\secref{reachability-specification}. 
The tool does not explicitly construct the automaton from~\secref{automaton}.
Instead, it implements a fixpoint algorithm, described below,
which determines the set of states associated with every program point, and
uses this set to check for memory safety.

\subsection{Fixpoint Algorithm} 
Here we give a high-level description of the implementation.
First, observe that $\Aa_\memsafe$ has exponentially many states in the number of program variables.
We manage this by implementing the transition $\delta$ from \secref{automaton} and only building automaton states as they are encountered.
For straight-line programs, each transition results in a single new state. For complex programs that use $\pif$-$\pthen$-$\pelse$ and $\pwhile$ statements, we need to keep track of a bag of states. To see this, suppose we are checking the program $\pif \, (c) \, \pthen \, s_1 \, \pelse \, s_2$. To begin our bag of states contains only the initial state $q_0$. In order to process the $\pif$-$\pthen$-$\pelse$, the initial state needs to make two separate transitions, one for each of the two executions generated by the two branches. The bag of states thus grows to include $s_\textsf{then} = \delta(q_0, \dblqt{\passume(c)})$ and $s_\textsf{else} = \delta(q_0, \dblqt{\passume(\neg c)})$. The branches can then take transitions starting from $s_\textsf{then}$ and $s_\textsf{else}$. The union of the reachable states from each branch gives us a new bag of states from which to process the remaining program. The intuition for $\pif$-$\pthen$-$\pelse$ carries over to $\pwhile$. From any state in our bag at the beginning of a $\pwhile$, we collect the bag of states that results from any number of executions of the loop guard and body, starting from that state. The number of states $|\Aa_\memsafe|$ is finite, and thus a fixed point is guaranteed. 
For the benchmarks considered here, the number of states explored by the tool is significantly smaller than the worst case.

If at any point the tool detects a memory safety violation it halts and reports the error. In addition to memory safety, it also monitors the \koherence~property as it processes the input program. To do so, it keeps track of terms that were computed using existing equivalence classes, but which were subsequently dropped. If the program attempts to compute the same term using the same classes, the implementation flags a failure of \koherence~and halts. For example, to process $\dblqt{a \passign f(c)}$, the tool checks that $f$ was not previously applied to $\eqcl{c}{\rcong}$ and later dropped. In general, this information can be maintained by remembering the equivalence classes $\eqcl{c_1}{\rcong}, \ldots, \eqcl{c_r}{\rcong}$ on which any function $f$ (of arity $r$) has been computed.

The algorithm implemented by the tool is more general than we have described thus far.
It can output the set of states reached at the head of a loop. By interpreting individual states as conjunctions of equality and disequality, and the set of states as a disjunction of such conjuncts, we can obtain an inductive invariant that proves memory safety (when the program is memory safe).
Any assertion in the form of a Boolean combination of equality statements on program variables can also be checked. This can be accomplished by appending the negated assertion to the end of the program and checking that all reachable states are infeasible. 

\subsection{Benchmarks}

We seek to answer the following basic questions about \koherent\ programs and our algorithm. First, is it the case that the most natural way to write heap-manipulating single-pass programs on lists and trees results in \koherence? Second, for \koherent~programs with and without memory safety violations, is the algorithm able to verify memory safety or find violations of it in the corresponding uninterpreted program? Third, how fast is the algorithm? Note that since we do abstract the primitive types and functions/relations on real programs, it is not clear that the tool will be able to prove memory safe programs as so. 

To answer the first and second questions, we wrote natural heap-manipulating programs over singly-linked lists (sorted and unsorted) and tree data-structures (binary search trees, AVL trees, rotations of trees) in our input language, and evaluated the tool on them to determine if they were \koherent and to test for memory safety. 

The first column of Table~\ref{tab:evaluation} gives
the set of programs in our benchmark.
These are typically \emph{single-pass} algorithms over an input data-structure. For example, finding a key in a binary search tree or in-place reversal of a linked list are single pass algorithms. 



\begin{table}
  \centering
    \caption{Evaluation of tool for proving memory safety and finding
  memory safety errors}
  \scalebox{0.85}{
  \begin{tabular}{l|r|c|c|r|c}
    \hline
    \hline
    Program & LOC & Streaming-~ & Found & \# States & Time \\
    & & coherent? & Safe &  & (ms)     \\ 
     \hline
{\bf Verification of Memory Safe Programs} & & & & & \\
\hline
\benchfont{sll-append-safe} & 19 & yes & \ding{51} & 4 & 4 \\
\benchfont{sll-copy-all-safe} & 27 & yes & \ding{51} & 6 & 3 \\
\benchfont{sll-delete-all-safe} & 56 & yes & \ding{51} & 58 & 9 \\
\benchfont{sll-deletebetween-safe} & 42 & yes & \ding{51} & 53 & 8 \\
\benchfont{sll-find-safe} & 16 & yes & \ding{51} & 4 & 3 \\
\benchfont{sll-insert-back-safe} & 20 & yes & \ding{51} & 3 & 3 \\
\benchfont{sll-insert-front-safe} & 8 & yes & \ding{51} & 1 & 3 \\
\benchfont{sll-insert-safe} & 50 & yes & \ding{51} & 12 & 4 \\
\benchfont{sll-reverse-safe} & 12 & yes & \ding{51} & 3 & 3 \\
\benchfont{sll-sorted-concat-safe} & 17 & yes & \ding{51} & 4 & 3 \\
\benchfont{sll-sorted-insert-safe} & 50 & yes & \ding{51} & 12 & 4 \\
\benchfont{sll-sorted-merge-safe} & 74 & yes & \ding{51} & 62 & 8 \\
\benchfont{bst-find-safe} & 23 & yes & \ding{51} & 21 & 4 \\
\benchfont{bst-insert-safe} & 45 & yes & \ding{51} & 29 & 6 \\
\benchfont{bst-remove-root-safe} & 52 & yes & \ding{51} & 12 & 4 \\
\benchfont{avl-balance-safe} & 190 & yes & \ding{51} & 48 & 12 \\
\benchfont{tree-rotate-left-safe} & 25 & yes & \ding{51} & 3 & 3 \\
    \\ \hline
{\bf Finding Errors in Memory-unsafe Programs} & & & & & \\
\hline
sll-append-unsafe & 20 & yes & \ding{55} & --- & 3 \\
sll-copy-all-unsafe & 29 & yes & \ding{55} & --- & 4 \\
sll-delete-all-unsafe & 58 & yes & \ding{55} & --- & 5 \\
sll-deletebetween-unsafe & 44 & yes & \ding{55} & --- & 4 \\
sll-find-unsafe & 18 & yes & \ding{55} & --- & 3 \\
sll-insert-back-unsafe & 20 & yes & \ding{55} & --- & 3 \\
sll-insert-front-unsafe & 9 & yes & \ding{55} & --- & 3 \\
sll-insert-unsafe & 50 & yes & \ding{55} & --- & 3 \\
sll-reverse-unsafe & 12 & yes & \ding{55} & --- & 3 \\
sll-sorted-concat-unsafe & 17 & yes & \ding{55} & --- & 3 \\
sll-sorted-insert-unsafe & 50 & yes & \ding{55} & --- & 3 \\
sll-sorted-merge-unsafe-1 & 69 & yes & \ding{55} & --- & 4 \\
sll-sorted-merge-unsafe-2 & 63 & yes & \ding{55} & --- & 3 \\
bst-find-unsafe & 25 & yes & \ding{55} & --- & 4 \\
bst-insert-unsafe & 49 & yes & \ding{55} & --- & 6 \\
bst-remove-root-unsafe & 54 & yes & \ding{55} & --- & 3 \\
avl-balance-unsafe & 111 & yes & \ding{55} & --- & 3 \\
tree-rotate-left-unsafe & 20 & yes & \ding{55} & --- & 3 \\
    \\ \hline
{\bf Detecting Programs are \emph{not} } & & & & & \\
{\bf \koherent} & & & & & \\
\hline
sll-sorted-merge-non-\koherent & 75 & no & --- & --- & 4 \\
bst-remove-root-non-\koherent & 55 & no & --- & --- & 3 \\
    \hline
        \hline
        \vspace{0.1cm}
  \end{tabular}
  }
  \label{tab:evaluation}
\end{table}

The names of the programs indicate whether
or not the program truly contains an unsafe memory access (i.e., the ground truth).
Programs whose names end in `\benchfont{unsafe}' were obtained by introducing one of two possible memory safety errors into their `\benchfont{safe}' counterparts:
(i) attempts to read or write to a location that is unallocated, and (ii) freeing unallocated memory locations. 

One example of the first kind is illustrated in \benchfont{sll-copy-all}, which copies the contents of a linked list into a freshly allocated list. In this example, the program steps through the input list in a loop until it reaches $\nil$. In each iteration, a new node is allocated, initialized with the contents of the current node, and connected to the end of the new list. The program relies on the invariant that the new list has a next node to step to whenever the old list does. Thus, it does not perform a $\nil$ check when advancing along the next pointer for the new list. The \benchfont{sll-copy-all-\textit{unsafe}} fails to maintain the invariant by incorrectly adding the freshly allocated node to the new list. An example for errors of the second kind (freeing memory locations that may not be allocated) can be found in 
\benchfont{sll-deletebetween-unsafe}. 
In this example, the task is to delete all nodes 
in a linked list that have key values in a certain range. 
The mistake in this example happens when the program has found a node to delete, but, instead of saving the next node and deleting the current node, it instead frees the next node, which may be unallocated.

\subsection{Discussion of Results}
Table~\ref{tab:evaluation} shows the results for the evaluation, which was performed on a machine running Ubuntu 18.04 with an Intel i7 processor at 2.6 GHz. 
Columns 3-6 pertain to the operation of the algorithm on the benchmarks. Column 3 indicates whether or not the benchmark fails the \koherence~co\-ndition. Our tool terminates and identifies memory safety and violation of memory safety on all \koherent\ programs. Column 4 depicts whether or not an unsafe memory access was detected. Column 5 gives the total number of states that are reachable at the end of the program. Note that non-\koherence~and violation of memory safety preclude each other in the table. Upon detecting either, the algorithm halts (and we do not report the number of reachable states). Column 6 gives the total running time of the tool on each benchmark, which is negligible in all cases. Note that the number of reachable states for each example is also quite small relative to the total number of possible states, which grows faster than the Bell numbers. That our algorithm only examines a small fraction of the total state space is encouraging, and suggests that it may scale well for much larger and more complex programs.


\section{Related Work}
\seclabel{related}

Memory safety errors have attracted a lot of attention because they are serious security vulnerabilities that have been exploited to attack systems~\cite{nagarakatte2015,Szekeres2013};
they are one of the most common causes of security bugs~\cite{slashdotpost}.
Memory safety concerns have even led to new programming languages, such as Rust~\cite{rustlangdoc}, that statically assure memory safety (while being efficient).
Memory safety vulnerabilities of programs written in C/C++ are still of great concern,
and, consequently, identifying fundamental techniques that
establish decidability of the problem even for restricted classes of programs is interesting.

There is a rich literature of 
preventing memory safety errors at runtime by instrumenting code with runtime checks, or at compile time~(see \cite{nagarakatte2015}
and references therein, SafeC~\cite{safeclib}, 
CCured~\cite{necula2002ccured,AustinEtAlPLDI94,NeculaTOPLAS05,ConditPLDI03}, Cyclone~\cite{JimUSENIX02}, 
SVA~\cite{CriswellSOSP07}, etc.).
Static checking for memory safety is certainly possible 
when it is part of language design (for instance, using type systems as in Rust~\cite{Rust}).
Dynamic analysis techniques as in~\cite{valgrind2007,rosu-schulte-serbanuta-2009-rv,Serebryany2012} instrument program executables and report errors as they occur during program execution.
Recently, there has also been emerging interest in enforcing runtime memory safety using hardware and software support for tagged memory~\cite{Oleksenko2018,joannou2017,lowRISC,cheri2015,serebryany2018memory}.

This work stems from the recent decidability result on uninterpreted coherent programs~\cite{coherence2019}, which has also been extended to incorporate reasoning modulo theories including associativity and commutativity of functions over the data domain and ordering relations on the data domain~\cite{mathurtheories2019}. In our work we use automata-theoretic techniques that, over the data domain, reason about equality and function computation over the data elements. Further, for checking memory safety, our procedure tracks a subset of the allocated nodes, namely, the \emph{frontier} nodes. While a considerable portion of the literature on assertion checking and memory safety for heap-manipulating programs is devoted to techniques that compromise on either soundness, completeness, or decidability, there has been some work that aims at decidable reasoning, while still preserving some form of soundness and completeness.

The work in~\cite{Alur2011} reduces assertion checking of single-pass list-processing programs to questions on data string transducers that work over sequences of tagged data values. The decidablity is mainly a consequence of the fact that there is a \emph{single} variable that advances in a single-pass fashion. In contrast, our work defines a more general notion of single-pass programs using the \emph{\koherence} restriction that still allows for multiple variables to support pointer updates. Further, the work in~\cite{Alur2011} does not directly apply to verifying memory safety of programs as it does not explicitly handle freeing of memory locations, and the operations of the transducer cannot effect changes to the shape of the heap. Another key difference is that streaming data string transducers only allow reasoning about ordering and equality over data but cannot support more complex reasoning such as the congruence arising from function computation.

The work in~\cite{Bouajjani2006} proposes a class of list programs for which verification is decidable, and crucially relies on the idea of representing fragments of the allocated heap by a bounded number of segments and summaries about them, which is one among many other works~\cite{tvla,bardin2004,Balaban2005,Berdine2004,Manevich2005,Dor2000} that employ a similar approach. These works can handle limited reasoning with data such as total orders on the data domain, but again, do not support predicates like equality on data or function congruences resulting from equalities, and further, often address questions specific to lists.
Our work, on the other hand, tackles the more general problem of the verification of uninterpreted heap programs and instantiates the \alaw~ condition to the class of forest data-structures. 
The key idea of tracking the \emph{frontier} heap locations, which
we use for obtaining decidability of \koherent~ programs,
appears orthogonal to this line of work.

The work in~\cite{Bozga2007} differs fundamentally from ours in that pointer updates are forbidden, which is a salient feature of our work. Pointer updates are at the heart of the difficulty in building a theory of uninterpreted programs working over heaps. Additionally, that work
forbids nesting of loops as well as conditionals within loops, a restriction also used
 to obtain decidability in earlier work~\cite{godoy09} on uninterpreted programs;
there is no such syntactic restriction on the class of programs we introduce in this paper.

The work in~\cite{Bouajjani2005} over-approximates the set of heap configurations associated with a given program location as a regular language over a finite alphabet. The program transformations are represented by finite state transducers, and the work employs an abstraction refinement approach for verifying heap-manipulating programs. Since such abstraction refinement loops may not always terminate for arbitrary programs, the proposed approach is only a semi-decision procedure. Further, this work does not support reasoning over the data sort.
Other notable static analyses that employ abstraction refinement for verifying heap programs include the notable work on shape analysis~\cite{tvla, Yahav2001,lev-ami2000} and automatic predicate abstraction~\cite{Ball2001}.
Statically verifying memory safety using such
incomplete techniques can of course, and
commonly does, result in false positives.

There is a rich literature on program logics for
heap verification; in particular separation logics
~\cite{OHearn2001, Reynolds2002} and FO logics 
based on the principles of separation logic~\cite{LodingFrameLogicArxiv}. 
Decidable fragments of such logics have also been studied~\cite{Berdine2004,grasshopper2014,piskac2014cav,Piskac2013,Navarro2013,Navarro2011,berdine2006,Cook2011,Moller2001}.
However, typically, these decision procedures are for checking validity of Hoare triples, 
and the problem of generating loop invariants is often undecidable, as is the problem of completely automatic verification of programs against specifications expressed in these logics.
Some invariant generation techniques have been discovered for problems in this domain~\cite{infer,invsynthincomplete}, but are, of course, inherently incomplete.


\section{Conclusions and Future Work}
\seclabel{conclusion}
This paper establishes a foundational result for decidably verifying assertions in programs that update maps for a subclass that is alias-aware and coherent. We have used this general result to develop decidable verification of memory safety for a class of programs, called \koherent\ programs, working on forest data-structures.
We also proved membership of programs in this class is decidable. 
We showed through a prototype implementation of our decision procedure, 
and its evaluation on a set of single-pass algorithms, that forest data-structures typically fall in our decidable class, and that we can verify memory safety accurately for them.

The most compelling future direction is to adapt the technique in this paper to provide a memory safety analysis tool for a standard programming language (such as C/C++), handling the rest of the programming language using abstractions (e.g., arrays, allocation of varying blocks of memory, etc.). We believe that our automata-based algorithm will scale well. Realizing the techniques presented herein in a full-fledged memory safety analysis tool would be interesting.

On the theoretical front, there are several interesting directions.
First, we could ask how to generalize our results to go beyond \koherent~ programs on forest data-structures. 
Although forest data-structures are fairly common as initial heaps for many programs, finding a natural class of heap structures beyond forest 
data-structures where alias-awareness can be easily established seems
an interesting, challenging problem. We believe that data-structures such as doubly-linked lists and trees with parent pointers, and more generally, data-structures that have bounded tree-width with uniform tree decompositions may be amenable to our technique.
Tackling multi-pass algorithms on forest data-structures is also an interesting open direction.
We believe the best way to look at our work in a larger verification context is that single-pass \koherent\ programs are the new \emph{basic blocks} that can be completely automatically analyzed, despite the fact that they contain loops. 
Putting these blocks together to handle programs with multiple passes over data-structures, 
even in an incomplete fashion, is  an interesting future direction. 

Another dimension for exploration is 
to consider more general post-conditions that can be proved automatically and that go beyond simple assertions. 
One of the limitations of our work is that, though we have an implicit precondition that demands that data-structures are forests, we do not establish that the  data-structure in the post state
is also a forest. 
The ability to establish such a property will allow us to maintain the forest property of data-structures as an invariant across multiple calls from clients that manipulate a data-structure using a library of methods, by showing that each of the methods provably maintains this invariant.




\begin{acks}                            
We thank the anonymous reviewers of POPL for several comments that helped improve the paper.
Umang Mathur is partially supported by a Google PhD Fellowship. 
This material is based upon work supported by the National Science Foundation under Grants NSF CCF 1901069 and NSF CCF 1527395.
\end{acks}

\bibliography{references}



\end{document}